\newcommand{\K}{K}
\title{
On the influence of noise in randomized consensus algorithms*
}
\author{Renato Vizuete, Paolo Frasca, and Elena Panteley
\thanks{*Research supported in part by the French National Science Foundation
(ANR) via grant “Hybrid And Networked Dynamical sYstems” (HANDY), number ANR-18-CE40-0010. 
}
\thanks{R.~Vizuete and E.~Panteley are with Universit\'{e} Paris-Saclay, CNRS, CentraleSup\'{e}lec, Laboratoire des signaux et syst\`{e}mes, 91190, Gif-sur-Yvette, France. R.~Vizuete and P.~Frasca are with Univ.\ Grenoble Alpes, CNRS, Inria, Grenoble INP, GIPSA-lab, F-38000 Grenoble, France.
{\tt\small renato.vizuete@gipsa-lab.fr},
{\tt\small paolo.frasca@gipsa-lab.fr},\protect\\
{\tt\small elena.panteley@l2s.centralesupelec.fr}.}
}
\newcommand{\vertiii}[1]{{\left\vert\kern-0.25ex\left\vert\kern-0.25ex\left\vert #1 
    \right\vert\kern-0.25ex\right\vert\kern-0.25ex\right\vert}}
\newcommand{\nones}{N^{-1}\mathds{1}\mathds{1}^T}
\newcommand{\vect}{\mathrm{vec}}
\newtheorem{theorem}{Theorem}
\newtheorem{prop}{Proposition}
\newtheorem{lemma}{Lemma}
\newtheorem{remark}{Remark}
\newtheorem{assumption}{Assumption}
\begin{document}

\maketitle
\thispagestyle{empty}

\begin{abstract}
In this paper we study the influence of additive noise in randomized consensus algorithms. Assuming that the update matrices are symmetric, we derive a closed form expression for the mean square error induced by the noise, together with upper and lower bounds that are simpler to evaluate. Motivated by the study of Open Multi-Agent Systems, we concentrate on Randomly Induced Discretized Laplacians, a family of update matrices that are generated by sampling subgraphs of a large undirected graph. For these matrices, we express the bounds by using the eigenvalues of the Laplacian matrix of the underlying graph or the graph's average effective resistance, thereby proving their tightness. Finally, we derive expressions for the bounds on some examples of graphs and numerically evaluate them.
\end{abstract}

\section{Introduction}

Randomized interactions have been extensively studied in consensus systems~\cite{tahbaz2009consensus,ravazzi2015ergodic}, with a wide range of applications including social networks \cite{acemouglu2013opinion},
sensor networks \cite{kar2008sensor}, 
and clock synchronization \cite{bolognani2015randomized}. 
Even if consensus in random networks has been studied for long time, 
researchers have mostly considered ideal, noiseless, conditions for the interactions \cite{fagnani2008randomized}.  
Instead, more realistic dynamical models should at least include noise. 
The influence of noise has indeed been investigated in several works \cite{xiao2007distributed,jadbabaie2018scaling,yaziciouglu2019structural} regarding deterministic consensus, both in discrete-time \cite{garin2010survey} and in continuous-time \cite{patterson2010leader,bamieh2012coherence}.
Instead, relatively little is known about the effects of noise on randomized consensus. 
For instance, the authors in \cite{wang2012distributed} studied additive noise in random consensus over directed graphs, showing that, due to asymmetric loss of links, uncommon phenomena such as L\'{e}vy flights can arise in the evolution of the system. However, the emergence of this phenomenon is not possible for undirected graphs, provided the links are deactivated in a symmetric way. Nevertheless, noise prevents the states of the nodes from reaching consensus.

In this paper, we quantify the effects of noise by a noise index, which is simply the steady-state normalized mean square error between the states of the network and the consensus, and we study this index for symmetric update matrices.
In our first result, we obtain an explicit expression for it. Since this closed form involves the calculation of an $N^2$-dimensional matrix, we derive an upper bound and a lower bound that depend on the eigenvalues of relevant $N$-dimensional matrices. These results generalize well-known results about deterministic consensus with noise
~\cite{garin2010survey,bamieh2012coherence}. 

Next, we refine our analysis for a specific class of random update matrices, which we refer to as Randomly Induced Discretized Laplacians (RIDL). Given an underlying large graph, these matrices are generated by sampling a subset of active nodes and considering the subgraph induced by the active nodes. 
This way of sampling random update matrices, which to best of our knowledge has not been studied before in the literature, is motivated by the study of Open Multi-Agent Systems (OMAS): that is, multi-agent systems where agents can leave and join at any time \cite{hendrickx2016open,hendrickx2017open,franceschelli2018proportional,varma2018open}. 
When the update matrices are RIDLs, our bounds can be expressed as functions of the Laplacian eigenvalues of the underlying graph. Further rewriting of the bounds as functions of the average effective resistance reveals that they are asymptotically tight.

This paper is organized as follows. In Section II, we present the noise index, its closed form expression, and upper and lower bounds for its calculation. In Section III, we define RIDL matrices and illustrate their connection with OMAS. Then, we provide expressions for the bounds based on the spectrum of expected matrices. Finally, several examples of graphs are presented in Section IV and Section~V concludes the work.

\section{Noise in randomized consensus}

We consider the interactions between agents through a randomized averaging algorithm given by:
\begin{equation}\label{eq:dynamics}
x(t+1)=P(t)x(t),
\end{equation}
where $x(t)=[x_1(t),\ldots,x_N(t)]^T$ is a vector with the states of the nodes of the network and $P(t)$ is a stochastic matrix. 
One of the most important results on dynamics \eqref{eq:dynamics} is the determination of sufficient conditions to achieve probabilistic consensus with independent and identically distributed (i.i.d.) matrices $P(t)$.  
\begin{lemma}[Corollary 3.2 \cite{fagnani2008randomized}]\label{lemma:conditions}
The dynamics \eqref{eq:dynamics}  with i.i.d. matrices $P(t)$ reaches consensus
with probability 1
if:
\begin{subequations}
\begin{align}
P_{ii}(t)>0 \text{ with probability 1 for all }  i\in V , \label{eq:condi1} \\
G_{\bar P} \text{ has a globally reachable node}, \label{eq:condi2}
\end{align}
\end{subequations}
where $V=\{1,\ldots,N\}$ and $G_{\bar P}$ is the graph associated\footnote{Graph $G_{\bar P}$ is formed by adding edges between two nodes $i,j$ in $V$ when $\bar P_{ij}>0$, for all $i,j$, including possible self-loops.} with the expected matrix $\bar P=\mathbb{E}[P(t)]$. 
\end{lemma}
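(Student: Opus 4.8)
The plan is to prove consensus by showing that the backward product $\Phi(t):=P(t-1)P(t-2)\cdots P(0)$ converges almost surely to a matrix with identical rows, so that $x(t)=\Phi(t)x(0)\to\alpha\mathbf 1$ for a (random) scalar $\alpha$. To measure how far $\Phi(t)$ is from having equal rows I would use a coefficient of ergodicity, e.g.\ the Dobrushin coefficient $\tau(A)=1-\min_{i,j}\sum_k\min(A_{ik},A_{jk})\in[0,1]$, which controls the contraction on the disagreement subspace $\{x:\mathbf 1^Tx=0\}$ and is submultiplicative, $\tau(AB)\le\tau(A)\tau(B)$, for stochastic $A,B$. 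Since $\tau(P(t))\le1$, the sequence $\tau(\Phi(t))$ is non-increasing, and consensus follows once I establish $\tau(\Phi(t))\to0$ almost surely.

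First I would translate the two hypotheses into statements about the random supports. As the entries of $P(t)$ are nonnegative and $\bar P=\mathbb E[P(t)]$, we have $\bar P_{ij}>0$ if and only if $\mathbb P(P_{ij}(t)>0)>0$; hence each edge of $G_{\bar P}$ is realized with positive probability, while by \eqref{eq:condi1} every self-loop is present almost surely. Let $r$ be the globally reachable node guaranteed by \eqref{eq:condi2}: every node admits a directed path to $r$ in $G_{\bar P}$, so there is a spanning in-tree $T$ (with $N-1$ edges) in which the influence of $r$ reaches every node.

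The core combinatorial step is to exhibit a window length $L\le N-1$ and a probability $p>0$ such that, for every $t$, the product $Q_t:=P(t+L-1)\cdots P(t)$ has a strictly positive column $r$ with probability at least $p$. I would order the edges of $T$ by the distance of their child endpoint to $r$ and, over a window of length $L=N-1$, select at the $s$-th step a realization of $P$ whose support contains the $s$-th tree edge (possible with positive probability since $\bar P_{ij}>0\Rightarrow\mathbb P(P_{ij}>0)>0$) together with all self-loops (present almost surely). Processing edges in this order, the \emph{persistent self-loops let each node retain the influence of $r$ once acquired}, so by induction on the distance $(Q_t)_{vr}>0$ for every $v$; a matrix with a positive column is scrambling, whence $\tau(Q_t)\le1-\delta$ for a fixed $\delta>0$ on this event. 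By independence across the $L$ steps this configuration has probability at least $p:=\prod_s\mathbb P(P_{e_s}(t)>0)>0$. I expect \emph{this} step to be the main obstacle: a single realization need not contain all edges of $G_{\bar P}$, so influence must be spread edge by edge, and one has to verify that condition \eqref{eq:condi1} is exactly what provides the memory (and rules out periodicity) needed for the inductive propagation to succeed within a bounded horizon.

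Finally I would conclude by a Borel--Cantelli argument. Partitioning time into consecutive blocks $Q_0,Q_1,Q_2,\dots$ of length $L$, the events $\{\tau(Q_k)\le1-\delta\}$ are independent (the $P(t)$ are i.i.d.\ and the blocks use disjoint time indices) and each has probability at least $p>0$, so by the second Borel--Cantelli lemma infinitely many of them occur almost surely. Submultiplicativity then gives $\tau(\Phi(mL))\le\prod_{k<m}\tau(Q_k)\le(1-\delta)^{N_m}$, where $N_m$ is the number of good blocks up to $m$ and $N_m\to\infty$ almost surely; hence $\tau(\Phi(mL))\to0$, and monotonicity of $\tau(\Phi(t))$ in $t$ upgrades this to the full sequence. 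Therefore the rows of $\Phi(t)$ coalesce and $x(t)$ reaches consensus with probability $1$.
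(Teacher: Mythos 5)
Note first that the paper does not prove this lemma: it is imported verbatim as Corollary~3.2 of the cited reference \cite{fagnani2008randomized}, so there is no in-paper argument to compare yours against. On its own merits, your proposal follows the standard route for this classical result (coefficient of ergodicity, positive-probability ``good'' windows built from a spanning in-tree rooted at the globally reachable node, self-loops as memory, Borel--Cantelli), and the skeleton is sound: the equivalence $\bar P_{ij}>0\Leftrightarrow\mathbb{P}(P_{ij}(t)>0)>0$ is correct because the entries are nonnegative, the edge-by-edge propagation with almost-sure self-loops does give $(Q_t)_{vr}>0$ for all $v$ after $L=N-1$ steps on an event of positive probability, and $\tau(\Phi(t))\to0$ a.s.\ does imply consensus since $\max_i x_i(t)-\min_i x_i(t)\le\tau(\Phi(t))\left(\max_i x_i(0)-\min_i x_i(0)\right)$ with the max nonincreasing and the min nondecreasing.

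The one genuine gap is the claim that $\tau(Q_t)\le 1-\delta$ for a \emph{fixed} $\delta>0$ on the event you construct. That event only constrains the \emph{supports} of the realizations, so on it you get $(Q_t)_{vr}>0$ and hence $\tau(Q_t)<1$, but $\min_v(Q_t)_{vr}$ can be arbitrarily small (the law of $P(t)$ may put mass on arbitrarily small positive entries), so no uniform $\delta$ follows from positivity alone. This is easily repaired: since $\mathbb{P}(\tau(Q_0)<1)\ge p>0$ and $\{\tau(Q_0)<1\}=\bigcup_{n\ge1}\{\tau(Q_0)\le 1-1/n\}$, continuity of measure yields some $\delta>0$ with $\mathbb{P}(\tau(Q_0)\le 1-\delta)>0$, and your Borel--Cantelli step then runs unchanged with this event; alternatively, apply the strong law to the i.i.d.\ nonnegative variables $1-\tau(Q_k)$, whose mean is positive, to conclude $\sum_k\bigl(1-\tau(Q_k)\bigr)=\infty$ and hence $\prod_k\tau(Q_k)=0$ a.s. With that patch the argument is complete. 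A final caution: the footnote's definition of $G_{\bar P}$ is silent about edge orientation, so make sure your ``path to $r$'' convention matches the direction in which $P_{ij}>0$ transports information (node $i$ listening to node $j$), as your in-tree construction implicitly assumes.
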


Condition \eqref{eq:condi1} is easy to satisfy in most cases since each agent usually has access to his own state. For undirected graphs, condition \eqref{eq:condi2} is equivalent to having a connected network. Moreover, for undirected graphs, the matrices $P(t)$ of dynamics \eqref{eq:dynamics} are usually symmetric and, therefore, doubly stochastic: if matrices $P(t)$ are doubly stochastic, then dynamics~\eqref{eq:dynamics}  
reaches average consensus with probability 1. 

Even if under ideal conditions it is possible to reach average consensus for undirected graphs, noise can affect the performance of the system. In order to study its effects, additive noise can be included in the dynamics \cite{garin2010survey,fagnani2017introduction}, by defining: 
\begin{equation}\label{eq:dynamics_noise}
x(t+1)=P(t)x(t)+n(t),
\end{equation}
where $n(t)$ is a vector of noise. It is natural to define a performance index to measure the deviation from the consensus point due to the effect of the perturbations as:
\begin{equation}\label{eq:noise_index}
J_{\mathrm{noise}}:=\dfrac{1}{N}\lim_{t\rightarrow{+\infty}}\mathbb{E}\left[\left\Vert x(t)-N^{-1}\mathds{1}\mathds{1}^Tx(t)\right\Vert^2\right],
\end{equation}
where $\mathds{1}$ is a vector of ones. The purpose of this paper is studying this index, under the following standing assumption.
\begin{assumption} The stochastic matrices $P(t)$ are i.i.d., symmetric, and satisfy \eqref{eq:condi1}-\eqref{eq:condi2}.
\end{assumption}
Our first contribution in this direction is showing that, under suitable conditions on the noise vector $n(t)$, index $J_{\mathrm{noise}}$ can be expressed in closed form.
\begin{theorem}[Mean square error index]
Consider system \eqref{eq:dynamics_noise} under Assumption~1 with a noise vector $n(t)$ uncorrelated w.r.t. both $i$ and $t$, with zero mean and variance $\sigma^2$. Then, the noise index \eqref{eq:noise_index} can be expressed as:
\begin{equation}\label{eq:jnoise}
J_{\mathrm{noise}}=\dfrac{\sigma^2}{N}\left(\vect^T(I_N)(I_{N^2}-\K)^{-1}\vect(I_N)-1\right),
\end{equation}
where $\K=\mathbb{E}[P(t)\Omega\otimes P(t)]$, $\Omega=I_N-\nones$ and $I_m$ is the identity matrix of size $m$.
\end{theorem}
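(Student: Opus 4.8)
The plan is to track the covariance of the consensus error and convert its steady-state fixed-point equation into the stated linear system by vectorization. Writing $\Omega = I_N - \nones$ for the orthogonal projection onto the complement of $\mathds{1}$, I would first set $e(t) := \Omega x(t)$, so that $J_{\mathrm{noise}} = \frac{1}{N}\lim_{t\to\infty}\mathbb{E}[\|e(t)\|^2]$. Because Assumption~1 makes each $P(t)$ symmetric and hence doubly stochastic, we have $P(t)\mathds{1}=\mathds{1}$ and $\mathds{1}^T P(t)=\mathds{1}^T$, which yields the commutation $\Omega P(t) = P(t)\Omega$. Applying $\Omega$ to \eqref{eq:dynamics_noise} then gives the closed error recursion $e(t+1) = P(t)e(t) + \Omega n(t)$, where by construction $e(t)$ always lies in the range of $\Omega$, i.e. $\Omega e(t)=e(t)$.

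Next I would propagate the second moment $\Sigma(t):=\mathbb{E}[e(t)e(t)^T]$, noting that $\mathbb{E}[\|e(t)\|^2]=\mathrm{tr}\,\Sigma(t)$. Expanding $\Sigma(t+1)$ and using that $n(t)$ is zero-mean and independent of both $P(t)$ and of the past (so that $e(t)$ and $P(t)$ are independent of $n(t)$) kills the two cross terms; together with $\mathbb{E}[n(t)n(t)^T]=\sigma^2 I_N$ and $\Omega^2=\Omega$ this leaves $\Sigma(t+1)=\mathbb{E}[P(t)\Sigma(t)P(t)]+\sigma^2\Omega$. The decisive algebraic step is to insert the projection: since $e(t)\in\mathrm{range}(\Omega)$ we have $\Sigma(t)\Omega = \Sigma(t)$, so $\mathbb{E}[P(t)\Sigma(t)P(t)]=\mathbb{E}[P(t)\Sigma(t)\Omega P(t)]$. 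Vectorizing via $\vect(ABC)=(C^T\otimes A)\vect(B)$ and using the symmetry of $P(t)$ and $\Omega$ turns this into $\K\,\vect(\Sigma(t))$ with $\K=\mathbb{E}[P(t)\Omega\otimes P(t)]$, giving the affine iteration $\vect(\Sigma(t+1))=\K\,\vect(\Sigma(t))+\sigma^2\vect(\Omega)$.

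I would then pass to the limit, and this is the step I expect to be the main obstacle: one must show the iteration converges and that $I_{N^2}-\K$ is invertible, equivalently that the spectral radius of $\K$ on the relevant subspace is strictly less than one. The point is that inserting $\Omega$ is exactly what removes the unit eigenvalue that $\mathbb{E}[P(t)\otimes P(t)]$ carries in the $\mathds{1}\otimes\mathds{1}$ direction; the remaining contraction is precisely the mean-square convergence to average consensus guaranteed by the consensus conditions of Lemma~\ref{lemma:conditions} under Assumption~1. Granting this, the steady state is $\vect(\Sigma_\infty)=\sigma^2(I_{N^2}-\K)^{-1}\vect(\Omega)$, and since $\mathrm{tr}\,\Sigma_\infty=\vect^T(I_N)\vect(\Sigma_\infty)$ we obtain $J_{\mathrm{noise}}=\frac{\sigma^2}{N}\vect^T(I_N)(I_{N^2}-\K)^{-1}\vect(\Omega)$.

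Finally I would simplify the right-hand factor by writing $\vect(\Omega)=\vect(I_N)-N^{-1}\vect(\mathds{1}\mathds{1}^T)$. A short computation shows $\K\,\vect(\mathds{1}\mathds{1}^T)=\vect(\mathbb{E}[P(t)\mathds{1}\mathds{1}^T\Omega P(t)])=0$, because $\mathds{1}^T\Omega=0$; hence $\vect(\mathds{1}\mathds{1}^T)$ is fixed by $(I_{N^2}-\K)^{-1}$. Using $\vect^T(I_N)\vect(\mathds{1}\mathds{1}^T)=\mathrm{tr}(\mathds{1}\mathds{1}^T)=N$ then converts the $N^{-1}\vect(\mathds{1}\mathds{1}^T)$ contribution into the constant $1$, yielding the claimed expression \eqref{eq:jnoise}.
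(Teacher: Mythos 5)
Your argument is correct and reaches the stated formula, but by a genuinely different route than the paper. The paper unrolls the solution explicitly, $x(t)=Q(t)x(0)+\sum_{s}Q(s)n(t-s-1)$ with $Q(s)=P(s-1)\cdots P(0)$, splits $\mathbb{E}[\Vert\delta(t)\Vert^2]$ into three terms, and evaluates the noise term as a geometric series $\sum_s \K^s$ obtained by repeatedly inserting the idempotent $\Omega$ into $Q^T(s)\Omega Q(s)$ and vectorizing; the initial-condition term is killed separately by citing almost-sure convergence of $Q^T(t)Q(t)$ to $\nones$. You instead close a Lyapunov-type recursion on the error covariance, $\vect(\Sigma(t+1))=\K\vect(\Sigma(t))+\sigma^2\vect(\Omega)$, and read off the fixed point. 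The two derivations hinge on the same algebraic trick ($\Omega^2=\Omega$, $\Omega P=P\Omega$, and $\vect(ABC)=(C^T\otimes A)\vect(B)$) and produce the same operator $\K$, but your route buys two things: the initial condition is absorbed into the transient $\K^t\vect(\Sigma(0))\to 0$ with no extra citation, and your observation that $\K\vect(\mathds{1}\mathds{1}^T)=0$ gives a cleaner derivation of the ``$-1$'' in \eqref{eq:jnoise} than the paper's bookkeeping of $(I_N\otimes\Omega)-I_{N^2}$. The one step you defer --- that the relevant spectral radius of $\K$ is strictly below one, so the affine iteration converges and $I_{N^2}-\K$ is invertible --- is a real obligation, not a formality: probabilistic consensus from Lemma~1 does not by itself hand you mean-square contraction, and this is precisely where the paper does supply an argument, namely the Jensen bound $\Vert\mathbb{E}[P\Omega\otimes P]\Vert_2^2\le\mathbb{E}[\Vert P\Omega\Vert_2^2\Vert P\Vert_2^2]<1$. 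You should either reproduce a bound of that type or invoke an equivalence between probabilistic and mean-square consensus for i.i.d.\ doubly stochastic matrices; as written, ``granting this'' leaves the only nontrivial analytic step of the theorem unproved.
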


\begin{proof}
For every $t$ we have:
$$
x(t)=Q(t)x(0)+\sum_{s=0}^{t-1}Q(s)n(t-s-1),
$$
where $Q(t)=P(t-1)\ldots P(1)P(0)$.
Let us denote $H=\nones$ and $\delta(t)=x(t)-\nones x(t)=x(t)-Hx(t)$. Then,
$$
\delta(t)=(Q(t)-H)x(0)+\sum_{s=0}^{t-1}(Q(s)-H)n(t-1-s).
$$
The expectation of its squared norm can be expressed as:
\begin{align}
&\mathbb{E}[\Vert\delta(t)\Vert^2]
=\mathbb{E}\left[\Vert (Q(t)-H)x(0)\Vert^2\right]\label{eq:3terms} \\\nonumber
&+2\sum_{s=0}^{t-1}x^T(0)(Q(t)-H)^T(Q(s)-H)\mathbb{E}[n(t-1-s)] \\\nonumber
&+\!\sum_{s,r=0}^{t-1}\!\mathbb{E}\left[[(Q(s)-H)n(t-s-1)]^T[(Q(r)-H)n(t-r-1)]\right]. 
\end{align}
First, we analyse the third term of \eqref{eq:3terms} which we denote by $U$. By applying the trace on the scalar $U$, we obtain 
$$
U\! =\!\sum_{s,r=0}^{t-1}\!\!\mathrm{tr}\left(\mathbb{E}\left[(Q^T\!\! (s)Q(r)-H)\right]\!\mathbb{E}[n(t-1-s)n^T\!\! (t-1-r)]\right).
$$
Due to the characteristics of the noise vector, we have:
$$
\mathbb{E}[n(t-1-s)n^T(t-1-r)]=
\begin{cases}
\sigma^2I_N, \;\;\;\; s=r\\
0, \;\;\;\; s\neq r
\end{cases},
$$
which yields:
\begin{align}
U&=\sigma^2\sum_{s=0}^{t-1}\mathrm{tr}\left(\mathbb{E}\left[Q^T(s)Q(s)-H\right]\right)\nonumber\\
&=\sigma^2\sum_{s=0}^{t-1} \mathrm{tr}\left(\mathbb{E}\left[Q^T(s)I_NQ(s)-Q^T(s)H Q(s)\right]\right)\nonumber\\
&=\sigma^2\sum_{s=0}^{t-1} \mathrm{tr}\left(\mathbb{E}\left[Q^T(s)\Omega Q(s)\right]\right).\label{eq:QOQ}
\end{align}
Then, we express the trace using the vectorization operator ($\mathrm{tr}(A^TB)=\mathrm{vec}^T(A)\mathrm{vec}(B)$) as:
$$
U=\sigma^2\sum_{s=0}^{t-1} \vect^T(I_N)\vect\left(\mathbb{E}\left[Q^T(s)\Omega Q(s)\right]\right).
$$
Due to the idempotence of $\Omega$ and commutativity of the product $P(s)\Omega$, we can write the matrix $Q^T(s)\Omega Q(s)$ as:
\begin{equation}\label{eq:QOmegaQ}
Q^T\! (s)\Omega Q(s)\!=\! P(0)P(1)\ldots P(s\! -\! 1)P(s\! -\! 1)\Omega\ldots P(1)\Omega P(0)\Omega.      
\end{equation}
Then, we can apply the property $\vect(ABC)=(C^T\otimes A)\vect(B)$ and we get:
\begin{multline*}
U=\sigma^2\vect^T(I_N)\Bigg((I_N\otimes \Omega)\vect(I_N)\\  +\sum_{s=1}^{t-1}\!\mathbb{E}\left[(P(0)\Omega\otimes P(0)) \vect(P(1)\ldots P^2\! (s\! -\! 1)\Omega\ldots P(1)\Omega)\right]\!\Bigg)\! .
\end{multline*}
Following an iterative procedure we obtain:
\begin{multline*}
U=\sigma^2\vect^T(I_N)\Bigg((I_N\otimes \Omega)\\
+\sum_{s=1}^{t-1}\!\mathbb{E}\left[(P(0)\Omega\otimes P(0))\ldots(P(s\!-\! 1)\Omega\otimes P(s\! -\! 1))\right]\Bigg)\!\vect(I_N).
\end{multline*}
Since the matrices $P(s)$ are i.i.d., we get:
\begin{multline*}
U=\sigma^2\vect^T(I_N)\left(\sum_{s=1}^{t-1}\mathbb{E}[P\Omega\otimes P]^s+(I_N\otimes \Omega)\right)\vect(I_N) \\
=\sigma^2\vect^T(I_N)\left(\sum_{s=0}^{t-1}\mathbb{E}[P\Omega\otimes P]^s+(I_N\otimes \Omega)-I_{N^2}\right)\vect(I_N),
\end{multline*}
where to satisfy space constraints, we have avoided including the time dependence of matrices $P(t)$. This abuse of notation will be made multiple times in the rest of the paper.
Notice that $(I_N\otimes\Omega)-I_{N^2}=-(I_N\otimes\nones)$ and $\vect^T(I_N)(I_N\otimes\nones)\vect(I_N)=1$.
Let us recall $\K=\mathbb{E}[P\Omega\otimes P]$ and consider the limit $t\rightarrow{+\infty}$. By applying Jensen inequality 
to the spectral norm $\Vert \cdot \Vert_2$, we can see that $\Vert \mathbb{E}[P\Omega\otimes P]\Vert_2^2\le \mathbb{E}[\Vert P\Omega\otimes P \Vert_2^2]=\mathbb{E}[\Vert P\Omega \Vert_2^2\Vert P \Vert_2^2]<1$. 
Therefore,  $\sum_{s=0}^\infty \K^s=(I_{N^2}-\K)^{-1}$, and we have:
$$
\lim_{t\rightarrow{+\infty}}U=\sigma^2\left(\vect^T(I_N)(I_{N^2}-\K)^{-1}\vect(I_N)-1\right).
$$
The second term of \eqref{eq:3terms} is zero since the noise process has zero mean. The first term of \eqref{eq:3terms}, denoted by $F=\mathbb{E}\left[\Vert (Q(t)-H)x(0)\Vert^2\right]$,  can be expressed as:
\begin{align*}
F&=\mathbb{E}\left[\left[(Q(t)-H)x(0)\right]^T\left[(Q(t)-H)x(0)\right]\right]\\
&=\mathbb{E}\left[x^T(0)(Q^T(t)Q(t)-H)x(0)\right].
\end{align*}
Being $Q^T(t)Q(t)$ the product of doubly stochastic matrices, per Theorem~4 in \cite{tahbaz2009consensus} it converges to $H$ with probability 1, so that $\lim_{N\rightarrow{ \infty}} F=0$. 
Finally, $J_{\mathrm{noise}}$ is given by \eqref{eq:jnoise}.
\end{proof}

\begin{remark}[Alternate expressions for $J_{\mathrm{noise}}$]
Since $\Omega$ is idempotent and commutes with any stochastic matrix, expression~\eqref{eq:QOmegaQ} could be arranged in two other ways, so that instead of \eqref{eq:jnoise} one would obtain the analogous expression where the symmetric matrix $K=\mathbb{E}[P(t)\Omega\otimes P(t)]$  is replaced by $\mathbb{E}[P(t)\Omega\otimes P(t)\Omega]$ or by $\mathbb{E}[P(t)\otimes P(t)\Omega]$. 
\end{remark}

\subsection{Upper and lower bounds}
The calculation of \eqref{eq:jnoise} requires the determination of the $N^2\times N^2$ matrix $\K$
which is impractical either for theoretical or numerical analysis.
For this reason, we derive a lower bound $J_{\mathrm{LB}}$ and an upper bound $J_{\mathrm{UB}}$ for the noise index, which depend on the spectrum of symmetric $N$-dimensional matrices. 

\begin{theorem}[Upper and lower bounds]\label{thm:bounds}
For the noise index \eqref{eq:noise_index} we have:
\begin{equation}\label{eq:bounds}
\dfrac{\sigma^2}{N}\sum_{i=2}^N\dfrac{1}{1-\lambda_i^2(\bar P)} \le J_{\mathrm{noise}}\le \dfrac{\sigma^2}{N}\sum_{i=2}^N\dfrac{1}{1-\lambda_i(\bar{\bar P})} ,
\end{equation}
where $\bar P=\mathbb{E}[P(t)]$ and $\bar{\bar P}=\mathbb{E}[P^2(t)]$.
\end{theorem}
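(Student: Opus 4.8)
The plan is to turn the $N^2$-dimensional resolvent in \eqref{eq:jnoise} into a series of traces of powers of an $N$-dimensional linear map, and then to sandwich that map between two comparison maps whose spectra are exactly $\{\lambda_i^2(\bar P)\}$ and $\{\lambda_i(\bar{\bar P})\}$. Since $P(t)$ is doubly stochastic it commutes with $\Omega$, so writing $\mathcal T(X):=\mathbb E[PXP]$ one checks by induction that $\K^s\,\vect(I_N)=\vect\!\big(\mathcal T^s(\Omega)\big)$ for $s\ge1$ (the step uses $\mathcal T^s(\Omega)\,\Omega=\mathcal T^s(\Omega)$, as $\mathcal T^s(\Omega)\mathds1=0$), whence $\vect^T(I_N)\K^s\vect(I_N)=\mathrm{tr}\,\mathcal T^s(\Omega)=\mathrm{tr}\big(\Omega\,\mathcal T^s(I_N)\big)$, the last equality using that $\mathcal T$ is self-adjoint for $\langle A,B\rangle=\mathrm{tr}(AB)$. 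Expanding $(I_{N^2}-\K)^{-1}=\sum_{s\ge0}\K^s$ and noting that the $s=0$ term together with the $-1$ of \eqref{eq:jnoise} produces $\mathrm{tr}\,\Omega=N-1=\mathrm{tr}(\Omega\,\mathcal T^0(I_N))$, I would first establish
\begin{equation}\label{eq:Jseries}
J_{\mathrm{noise}}=\frac{\sigma^2}{N}\sum_{s\ge0}\mathrm{tr}\big(\Omega\,\mathcal T^s(I_N)\big).
\end{equation}
The two sides of \eqref{eq:bounds} admit the analogous forms $\frac{\sigma^2}{N}\sum_{s\ge0}\mathrm{tr}(\Omega\,\bar P^{2s})$ and $\frac{\sigma^2}{N}\sum_{s\ge0}\mathrm{tr}(\Omega\,\bar{\bar P}^{s})$, obtained from $\tfrac1{1-\mu}=\sum_{s\ge0}\mu^s$ and $\mathrm{tr}(\Omega M^s)=\sum_{i=2}^N\lambda_i^s(M)$ for symmetric doubly stochastic $M$.

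For the lower bound I would argue termwise. Using the variance inequality $\mathbb E[Z^TZ]\succeq(\mathbb E Z)^T(\mathbb E Z)$ with $Z=X^{1/2}P$ gives $\mathcal T(X)=\mathbb E[(X^{1/2}P)^T(X^{1/2}P)]\succeq \bar P X\bar P$ for every $X\succeq0$. As $\mathcal T$ is monotone, an induction then yields $\mathcal T^s(I_N)\succeq\bar P^{2s}$, and since $\Omega\succeq0$ each summand of \eqref{eq:Jseries} obeys $\mathrm{tr}(\Omega\,\mathcal T^s(I_N))\ge\mathrm{tr}(\Omega\,\bar P^{2s})$; summing over $s$ gives $J_{\mathrm{LB}}$.

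The upper bound is the delicate direction, and termwise domination cannot work (one verifies $\mathrm{tr}(\Omega\,\mathcal T^s(I_N))=\mathrm{tr}(\Omega\,\bar{\bar P}^s)$ for $s\le2$, with the gap opening only for $s\ge3$), so I would compare the whole sums through resolvents on the Hilbert--Schmidt space of symmetric matrices. Introduce the self-adjoint map $\mathcal B(X):=\bar{\bar P}^{1/2}X\bar{\bar P}^{1/2}$; on the invariant subspace $\{X:\mathds1^T X\mathds1=0\}$ both $\mathcal I-\mathcal T$ and $\mathcal I-\mathcal B$ are positive definite, and \eqref{eq:Jseries} reads $J_{\mathrm{noise}}N/\sigma^2=\langle\Omega,(\mathcal I-\mathcal T)^{-1}\Omega\rangle$, while $J_{\mathrm{UB}}N/\sigma^2=\langle\Omega,(\mathcal I-\mathcal B)^{-1}\Omega\rangle$. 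If one can show the Löwner inequality $\mathcal T\preceq\mathcal B$, then $0\prec\mathcal I-\mathcal B\preceq\mathcal I-\mathcal T$, antitonicity of operator inversion gives $(\mathcal I-\mathcal T)^{-1}\preceq(\mathcal I-\mathcal B)^{-1}$, and evaluating the quadratic form at $\Omega$ closes the bound. The main obstacle is exactly $\mathcal T\preceq\mathcal B$, i.e.\ $\mathbb E[\mathrm{tr}(XPXP)]\le\mathrm{tr}(X\bar{\bar P}^{1/2}X\bar{\bar P}^{1/2})$ for all symmetric $X$. I would reduce it by polar decomposition: writing $P=\sum_k\epsilon_k\lambda_k v_kv_k^T$ one has $\mathrm{tr}(XPXP)=\sum_{k,l}\epsilon_k\epsilon_l\lambda_k\lambda_l(v_k^TXv_l)^2\le\mathrm{tr}(X|P|X|P|)$, so it suffices to treat the PSD factor $Q=|P|$ and prove $\mathbb E[Q\otimes Q]\preceq(\mathbb E[Q^2])^{1/2}\otimes(\mathbb E[Q^2])^{1/2}$ on the symmetric subspace; establishing this last operator inequality, which I expect to handle by an operator Cauchy--Schwarz/convexity estimate, is the technical heart of the theorem.
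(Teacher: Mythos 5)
Your reduction of \eqref{eq:jnoise} to the series $\frac{\sigma^2}{N}\sum_{s\ge0}\mathrm{tr}(\Omega\,\mathcal T^s(I_N))$ with $\mathcal T(X)=\mathbb E[PXP]$ is correct and is just a repackaging of the paper's intermediate expression \eqref{eq:QOQ}. Your lower bound is also correct and complete: the operator Jensen inequality $\mathcal T(X)\succeq\bar P X\bar P$ together with monotonicity of $\mathcal T$ gives $\mathcal T^s(I_N)\succeq\bar P^{2s}$ termwise, whereas the paper instead applies scalar Jensen to $\mathbb E[\Vert Q(s)\Omega\Vert_F^2]$ and then passes through Schatten-$1$-norm identities. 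Both arguments hinge on the same variance inequality; yours is arguably cleaner.

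The upper bound, however, has a genuine gap, and it is compounded by a false premise. You assert that termwise domination ``cannot work'' because equality holds for $s\le2$ and a gap opens for $s\ge3$ --- but equality for small $s$ followed by a gap is only an obstruction if the gap has the wrong sign, and it does not: the termwise inequality $\mathrm{tr}(\Omega\,\mathcal T^s(I_N))\le\mathrm{tr}(\Omega\,\bar{\bar P}^s)$ is exactly how the paper proceeds. Since both $\mathcal T^s(I_N)=\mathbb E[Q^T(s)Q(s)]$ and $\bar{\bar P}^s$ are doubly stochastic, subtracting $\mathrm{tr}(H)=1$ shows this is equivalent to the trace inequality $\mathrm{tr}(\mathbb E[Q^T(s)Q(s)])\le\mathrm{tr}(\mathbb E[P^2]^s)$, which is a known majorization result for products of i.i.d.\ symmetric stochastic matrices (the paper cites Corollary~5 of the reference by Del Genio et al.). Your proposed substitute --- the L\"owner inequality $\mathcal T\preceq\mathcal B$ on the Hilbert--Schmidt space, i.e.\ $\mathbb E[\mathrm{tr}(XPXP)]\le\mathrm{tr}(X\bar{\bar P}^{1/2}X\bar{\bar P}^{1/2})$ for all symmetric $X$ --- is strictly stronger than the trace inequality actually needed, and you do not prove it: the proof stops at ``I expect to handle [it] by an operator Cauchy--Schwarz/convexity estimate.'' For the record, your reduction is completable: after the polar-decomposition step, the remaining claim $\mathbb E[g(P^2)]\le g(\mathbb E[P^2])$ with $g(A)=\mathrm{tr}(XA^{1/2}XA^{1/2})$ is precisely concavity of $g$ on the positive semidefinite cone, which is Lieb's concavity theorem with exponents $p=q=\tfrac12$. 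So the route can be closed, but only by invoking machinery considerably heavier than the termwise trace bound you dismissed; as written, the upper half of \eqref{eq:bounds} is not established.
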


\begin{proof}
In order to obtain the lower bound for the noise index, we can express the 
term \eqref{eq:QOQ} as: 
$$
U=\sigma^2\sum_{s=0}^{t-1} \mathrm{tr}\left(\mathbb{E}\left[Q^T(s)\Omega Q(s)\right]\right)=
\sigma^2\sum_{s=0}^{t-1} \mathbb{E}\left[\Vert Q(s)\Omega \Vert_F^2\right],
$$
where $\Vert \cdot \Vert_F$ is the Frobenius norm. Then we can apply Jensen inequality 
obtaining:
$$
U\ge \sigma^2\sum_{s=0}^{t-1} \Vert \mathbb{E}\left[Q(s)\Omega\right]\Vert_F^2.
$$
Since the matrices $P(s)$  are i.i.d., we can see that $\mathbb{E}[Q(s)\Omega]=\bar P^s \Omega$ such that:
$$
U\ge \sigma^2\sum_{s=1}^{t-1} \Vert \bar P^s \Omega\Vert_F^2+\Vert \Omega \Vert_F^2.
$$
For a symmetric matrix $P$ we have $\Vert P\Vert_F^2=\Vert P^2 \Vert_{S_1}$, where $\Vert A \Vert_{S_1}=\sum_{i=1}^N \vert \lambda_i(A)\vert$ is the Schatten 1-norm. Thus:
$$
U\ge
\sigma^2\sum_{s=1}^{t-1} \Vert (\bar P^2 \Omega)^s\Vert_{S_1}+\Vert \Omega \Vert_{S_1}\ge\sigma^2\left\Vert \sum_{s=1}^{t-1}  (\bar P^2 \Omega)^s+ \Omega \right\Vert_{S_1}.
$$
When $t\rightarrow{+\infty}$, we have:
\begin{align*}
\lim_{t\rightarrow{+\infty}}U&\ge \sigma^2\left\Vert \sum_{s=0}^{\infty}  (\bar P^2 \Omega)^s+ \Omega -I_N\right\Vert_{S_1}\\
&=\sigma^2\left\Vert (I_N-\bar P^2\Omega)^{-1}-\nones\right\Vert_{S_1}.
\end{align*}
By using the definition of the Schatten 1-norm we get:
$$
\lim_{t\rightarrow{+\infty}}U\ge \sigma^2\sum_{i=2}^N\dfrac{1}{1-\lambda_i^2(\bar P)},
$$
which is the desired lower bound.

For the upper bound, we consider the summation element in \eqref{eq:QOQ}, denoted by $S$:
\begin{align*}
S=&\mathrm{tr}\left(\mathbb{E}\left[Q^T(s)(I_N-H)Q(s)\right]\right)\\
=&\mathrm{tr}\left(\mathbb{E}\left[Q^T(s)Q(s)\right]\right)-\mathrm{tr}\left(H\right).
\end{align*}
At this point, we can use the inequality 
$$\mathrm{tr}(\mathbb{E}[Q^T(s)Q(s)])\le \mathrm{tr}(\mathbb{E}[P^2]^s),$$ proved in 
\cite[Cor.~5]{del2011majorization}, to obtain:
\begin{align*}
S\le&\mathrm{tr}(\bar{\bar P}^s)-\mathrm{tr}\left(H\right)\\
=&\mathrm{tr}\left(\mathbb{E}\left[P^2(0)P^2(1)\cdots P^2(s-1)-H\right]\right)\\
=&\mathrm{tr}\left(\mathbb{E}\left[P^2(0)P^2(1)\cdots P^2(s-1)\Omega\right]\right).
\end{align*}
Since $\Omega$ is idempotent and $P(s)\Omega=\Omega P(s)$, we can express the latter as:
\begin{align*}
S\le&\mathrm{tr}\left(\mathbb{E}\left[P(0)\Omega P(0) 
\cdots P(s-1)\Omega P(s-1)\right]\right)\\
=&\mathrm{tr}\left(\mathbb{E}\left[P(0)\Omega P(0)\right] 
\cdots \mathbb{E}\left[P(s-1)\Omega P(s-1)\right]\right)\\
=&\mathrm{tr}\left(\mathbb{E}\left[P(t)\Omega P(t)\right]^s\right).
\end{align*}
Then, the sum in \eqref{eq:QOQ} is upper bounded by:
$$
U\le \sigma^2\sum_{s=1}^{t-1} \mathrm{tr}\left(\mathbb{E}\left[P(0)\Omega P(0)\right]^s\right).
$$
When $t\rightarrow{+\infty}$, we get:
\begin{align*}
\lim_{t\rightarrow{+\infty}}U\le& \sigma^2\mathrm{tr}\left(\sum_{s=0}^{\infty} \mathbb{E}\left[P\Omega P\right]^s-H\right)\\
=&\sigma^2\mathrm{tr}\left((I_N-\mathbb{E}\left[P\Omega P\right])^{-1}-H\right)\\
=&\sigma^2\sum_{i=2}^N\dfrac{1}{1-\lambda_i(\bar{\bar P})},
\end{align*}
which gives the desired upper bound.
\end{proof}
\begin{remark}[Deterministic case]
For the deterministic dynamics $x(t+1)={\bar P}x(t)+n(t)$, that is, $P(t)=\bar P$, 
the noise index is given by $J_{\mathrm{noise}}=\frac{\sigma^2}{N}\sum_{i=2}^N\frac{1}{1-\lambda_i^2(\bar P)}$. In this case the upper and lower bounds of Theorem~\ref{thm:bounds} coincide and recover the well-known results about deterministic consensus  \cite{garin2010survey,fagnani2017introduction}. These questions have also been extensively studied in  continuous-time by the closely-related notion of network coherence  \cite{patterson2010leader,bamieh2012coherence}.
\end{remark}

\section{Randomly Induced Discretized Laplacians}

We consider the following sampling procedure. Let a set~$V$ of $N$ nodes be given, together with an undirected graph $\bar G$ that has $V$ as node set and whose adjacency matrix is denoted by $\bar A$. At each time $t$, every node has an independent probability $p$ of being {\em active}; at each time $t$, the current graph $G(t)$ is the subgraph of $\bar G$ that is induced by the set of active nodes. More formally, the current adjacency matrix $A(t)$ is generated as
\begin{equation}\label{eq:varying_A}
A(t)=\Gamma(t)\, \bar A\, \Gamma(t),  
\end{equation}
where   $\Gamma(t)=\mathrm{diag}[\gamma_1(t),\ldots,\gamma_N(t)]$ is a diagonal matrix with $N$ independent identically distributed Bernoulli random variables corresponding to each node. The value of the random variable $\gamma_i(t)$ indicates the state of agent $i$ at time $t$ such that when $\gamma_i(t)=1$, the node is active. 

Given the adjacency matrix $A(t)$, we then define the time-varying doubly stochastic matrix given by:
\begin{equation}\label{eq:varying_stochastic}
P(t)=I_N-\epsilon (D(t)- A(t)),    
\end{equation}
where $D(t)$ is the diagonal matrix of the degrees of the induced graph $G(t)$ and $0<\epsilon<\frac{1}{d_{\max}(\bar G)}$. 
After defining the time-varying Laplacian $L(t)=D(t)-A(t)$, it becomes natural to refer to stochastic matrices sampled according to  \eqref{eq:varying_stochastic}-\eqref{eq:varying_A} as {\em Randomly Induced Discretized Laplacians (RIDL)}.

\begin{remark}[RIDL and Open Multi-Agent Systems]
When studying Open Multi-Agent Systems, one is confronted with the issue of agents continuously joining and leaving the system. When the pool of possible participating agents is known,  arrivals and departures can be equivalently seen as activations and de-activations~\cite{varma2018open}. In such a case, one can further assume that the interaction network is a subgraph, induced by the active nodes, of a larger network of potential interactions. 
RIDL can therefore be used to study OMAS in this case, under the assumption that the activation of each agent is an independent event, as illustrated in Fig.~\ref{fig_random}.      
\end{remark}

\begin{figure}
\centering
\includegraphics[width=\columnwidth]{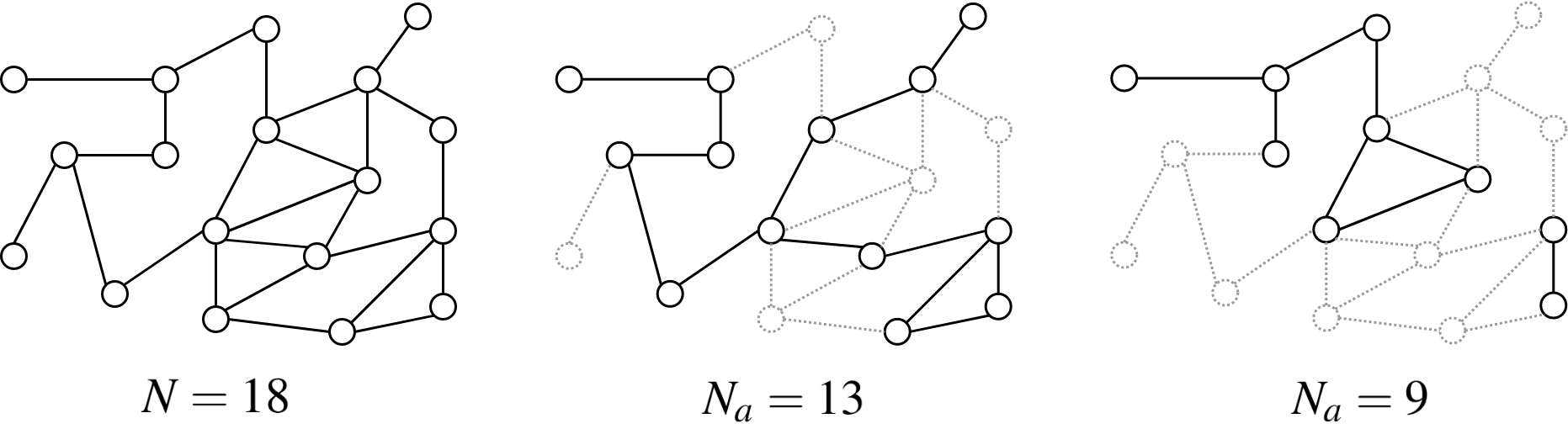} 
\caption{Random interactions with different numbers of active agents $N_a$ in a network with $N=18$ possible agents.}
\label{fig_random}
\end{figure}

We now study $J_\mathrm{noise}$ for RIDL matrices.
For an exact calculation as per \eqref{eq:jnoise}, it would be necessary to compute the matrix $\K=\mathbb{E}[P(t)\Omega\otimes P(t)]$. To this end, we can observe that  
$$
\K=
\mathbb{E}[P(t)\otimes P(t)]-\nones\otimes \bar P.
$$
The second term only depends on $\bar P$, which is given by
\begin{equation}\label{eq:Pexpected}
\bar P=I_N-\epsilon p^2(\bar D-\bar A)=I_N-\epsilon p^2\bar L,
\end{equation}
where $\bar D$ and $\bar L$ are the degree and Laplacian matrices of graph $\bar G$, but the first term
$$
\mathbb{E}[P(t)\otimes P(t)]=I_{N^2}-\epsilon p^2(I_N\otimes \bar L+\bar L\otimes I_N)+\epsilon^2 \mathbb{E}[L(t)\otimes L(t)]
$$
cannot be written as a function of $\bar P$ or $\bar L$. Hence, $J_\mathrm{noise}$ cannot be written as a function of $\bar P$. Nevertheless, the upper and lower bounds of Theorem 2 can be expressed as functions of the eigenvalues of $\bar P$.

\begin{prop}[Bounds for Discretized Laplacian]\label{prop:bounds}
For the time-varying stochastic matrix \eqref{eq:varying_stochastic} generated with \eqref{eq:varying_A}, the lower bound $J_\mathrm{LB}$ and upper bound $J_\mathrm{UB}$ in  \eqref{eq:bounds} become:
\begin{gather}
J_\mathrm{LB}=\dfrac{\sigma^2}{\epsilon p^2N}\sum_{i=2}^{N}\dfrac{1}{2\lambda_i(\bar L)-\epsilon p^2 \lambda_i^2(\bar L)}, \label{eq:lower} \\   
J_\mathrm{UB}=\dfrac{\sigma^2}{\epsilon p^2 N}\sum_{i=2}^{N}\dfrac{1}{2(1+\epsilon p-\epsilon)\lambda_i(\bar L)-\epsilon p \lambda_i^2(\bar L)}, \label{eq:upper}
\end{gather}
where $0=\lambda_1(\bar L)<\lambda_2(\bar L)\le \ldots \le \lambda_N(\bar L)$ are the eigenvalues of $\bar L$.
\end{prop}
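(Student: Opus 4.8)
The plan is to substitute the RIDL-specific forms of $\bar P$ and $\bar{\bar P}$ into the two bounds of Theorem~\ref{thm:bounds} and simplify. The lower bound is immediate: by \eqref{eq:Pexpected}, $\bar P = I_N - \epsilon p^2\bar L$ is a linear function of $\bar L$, so $\lambda_i(\bar P) = 1 - \epsilon p^2\lambda_i(\bar L)$. Writing $1-\lambda_i^2(\bar P) = (1-\lambda_i(\bar P))(1+\lambda_i(\bar P))$ and expanding gives $1-\lambda_i^2(\bar P) = \epsilon p^2\big(2\lambda_i(\bar L) - \epsilon p^2\lambda_i^2(\bar L)\big)$, and factoring $\epsilon p^2$ out of the sum in \eqref{eq:bounds} yields \eqref{eq:lower} at once.

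The upper bound requires the spectrum of $\bar{\bar P} = \mathbb{E}[P^2(t)]$. Writing $P(t) = I_N - \epsilon L(t)$, we have $\bar{\bar P} = I_N - 2\epsilon\,\mathbb{E}[L(t)] + \epsilon^2\,\mathbb{E}[L^2(t)]$, and the computation behind \eqref{eq:Pexpected} already gives $\mathbb{E}[L(t)] = p^2\bar L$. The crux of the argument is therefore to show that $\mathbb{E}[L^2(t)] = 2p^2(1-p)\bar L + p^3\bar L^2$. I would prove this entrywise, using $A_{ij}(t) = \gamma_i(t)\bar A_{ij}\gamma_j(t)$, the random diagonal degrees $L_{ii}(t) = \gamma_i(t)\sum_k\bar A_{ik}\gamma_k(t)$, and the off-diagonal entries $L_{ij}(t) = -\gamma_i(t)\bar A_{ij}\gamma_j(t)$, together with the Bernoulli moments $\mathbb{E}[\gamma_i] = \mathbb{E}[\gamma_i^2] = p$ and $\mathbb{E}[\gamma_i\gamma_j\gamma_k] = p^{|\{i,j,k\}|}$ for independent coordinates.

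The main obstacle is precisely this evaluation of $\mathbb{E}[L^2(t)]$. Because the random degrees $L_{ii}(t) = \gamma_i(t)\sum_k\bar A_{ik}\gamma_k(t)$ couple the activation of node $i$ with those of its neighbors, one has $L(t)\neq\Gamma(t)\bar L\Gamma(t)$, and the entries of $L^2(t)$ involve products of three Bernoulli variables whose expectation depends on how many of the indices coincide. The bookkeeping must separate, for the diagonal entries $(L^2)_{ii}$, the contribution $L_{ii}^2$ (splitting $k=l$ from $k\neq l$) from $\sum_{j\neq i}L_{ij}L_{ji}$, and for the off-diagonal entries $(L^2)_{im}$ with $i\neq m$, the terms $j\in\{i,m\}$ from $j\notin\{i,m\}$; collecting the powers of $p$ and using $\bar A_{ij}^2 = \bar A_{ij}$ and $\bar A_{ii} = 0$ then collapses every contribution into the entries of $\bar L$ and $\bar L^2 = (\bar D - \bar A)^2$, giving the claimed identity. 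Once this is in hand, $\bar{\bar P} = I_N - 2\epsilon p^2(1+\epsilon p - \epsilon)\bar L + \epsilon^2 p^3\bar L^2$ is a quadratic polynomial in $\bar L$, hence diagonal in the eigenbasis of $\bar L$ with $\lambda_i(\bar{\bar P}) = 1 - 2\epsilon p^2(1+\epsilon p - \epsilon)\lambda_i(\bar L) + \epsilon^2 p^3\lambda_i^2(\bar L)$. Substituting into $1/(1-\lambda_i(\bar{\bar P}))$ and factoring $\epsilon p^2$ from the summand gives \eqref{eq:upper}, completing the proof.
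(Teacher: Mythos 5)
Your proposal is correct and follows essentially the same route as the paper: substitute $\lambda_i(\bar P)=1-\epsilon p^2\lambda_i(\bar L)$ into the lower bound, and show that $\bar{\bar P}$ is a quadratic polynomial in $\bar L$ via $\mathbb{E}[L^2(t)]=2p^2(1-p)\bar L+p^3\bar L^2$, which matches the paper's intermediate result exactly. The only cosmetic difference is that you evaluate $\mathbb{E}[L^2(t)]$ entrywise with Bernoulli moment bookkeeping, whereas the paper expands $L^2=D^2-DA-AD+A^2$ and takes expectations block by block; both organize the same third-moment computation.
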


\begin{proof}
For the lower bound, we consider the expression \eqref{eq:Pexpected}. Since $\bar L$ is symmetric, the eigenvalues of $\bar P$ are given by $\lambda_i(\bar P)=1-\epsilon p^2\lambda_i(\bar L)$ and we obtain \eqref{eq:lower}.

For the upper bound, $\bar{\bar P}$ can be written as:
\begin{align*}
\bar{\bar P}&=\mathbb{E}[(I_N-\epsilon L)^2]
=I_N-2\epsilon p^2 \bar L+\epsilon^2\mathbb{E}[L^2]\\
&=I_N-2\epsilon p^2 \bar L+\epsilon^2(\mathbb{E}[D^2]-\mathbb{E}[D A]-\mathbb{E}[A D]+\mathbb{E}[A^2])
\end{align*}
Since $\mathbb{E}[\gamma_i(t)]=\mathbb{E}[\gamma_i^2(t)]=p$, we have:
$$
\mathbb{E}[D^2]=(p^2-p^3)\bar D+p^3 \bar D^2, \quad \mathbb{E}[DA]=p^3\bar D \bar A+(p^2-p^3)\bar A,
$$
$$
\mathbb{E}[AD]=p^3\bar A\bar D+(p^2-p^3)\bar A, \quad \mathbb{E}[A^2]=(p^2-p^3)\bar D+p^3 \bar A^2,
$$
which yields:
$$
\bar{\bar P}=I_N+2\epsilon p^2(\epsilon-\epsilon p-1)\bar L+\epsilon^2 p^3 \bar L^2.    
$$
The eigenvalues of $\bar{\bar P}$ are
$\lambda_i(\bar{\bar P})=1+2\epsilon p^2(\epsilon-\epsilon p-1)\lambda_i(\bar L)+\epsilon^2 p^3 \lambda_i^2(\bar L)$ and we obtain \eqref{eq:upper}.
\end{proof}

From expressions \eqref{eq:lower} and \eqref{eq:upper} we can observe that for the computation of the bounds for $J_{\mathrm{noise}}$, we only need to know the topology of 
$G_{\bar P}$
through $\bar L$.

Furthermore, inspired by the work in \cite{lovisari2013resistance}, we relate $J_{\mathrm{noise}}$ with
the well-studied \textit{Average Effective Resistance} or {\em Kirchhoff Index} of the graph. The latter can indeed be expressed \cite[Chapter~5]{fagnani2017introduction} as a function of Laplacian eigenvalues by
$$
R_{\mathrm{ave}}:=\dfrac{1}{N}\sum_{i=2}^N \dfrac{1}{\lambda_i(\bar L)}.
$$
\begin{prop}[Bounds and average effective resistance]\label{prop:average}
For the noise index \eqref{eq:noise_index} with the time-varying stochastic matrix \eqref{eq:varying_stochastic} generated with \eqref{eq:varying_A} we have: 
$$
\dfrac{\sigma^2}{2 p^2} \dfrac{R_{\mathrm{ave}}}{\epsilon}\le J_{\mathrm{noise}}\le \dfrac{\sigma^2}{2p^3(1-k )}\dfrac{R_{\mathrm{ave}}}{\epsilon},
$$
where $k=\epsilon d_{\max}<1$.
\end{prop}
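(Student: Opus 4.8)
The plan is to derive both inequalities directly from the explicit expressions \eqref{eq:lower} and \eqref{eq:upper} of Proposition~\ref{prop:bounds}, comparing them summand by summand against the terms of $R_{\mathrm{ave}}$. Since $J_{\mathrm{LB}}\le J_{\mathrm{noise}}\le J_{\mathrm{UB}}$ by Theorem~\ref{thm:bounds}, it suffices to lower bound $J_{\mathrm{LB}}$ and upper bound $J_{\mathrm{UB}}$. In each case I would factor a single $\lambda_i(\bar L)$ out of the denominator so that what remains can be matched against $\frac{1}{\lambda_i(\bar L)}$, the summand of $R_{\mathrm{ave}}=\frac{1}{N}\sum_{i=2}^N\frac{1}{\lambda_i(\bar L)}$.

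For the lower bound, I would rewrite the denominator in \eqref{eq:lower} as $\lambda_i(\bar L)\bigl(2-\epsilon p^2\lambda_i(\bar L)\bigr)$. Because $\epsilon$, $p$, and $\lambda_i(\bar L)$ are positive, the second factor is at most $2$, so each summand is at least $\frac{1}{2\lambda_i(\bar L)}$. Summing over $i$ and recognizing $R_{\mathrm{ave}}$ yields $J_{\mathrm{LB}}\ge\frac{\sigma^2}{2p^2}\frac{R_{\mathrm{ave}}}{\epsilon}$, the claimed lower bound.

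For the upper bound, I would factor \eqref{eq:upper} analogously as $\lambda_i(\bar L)\bigl(2(1+\epsilon p-\epsilon)-\epsilon p\,\lambda_i(\bar L)\bigr)$ and show the second factor is bounded below by $2p(1-k)$ uniformly in $i$. The key estimate is the standard spectral bound $\lambda_N(\bar L)\le 2d_{\max}(\bar G)$, which follows from Gershgorin's theorem applied to $\bar L$. Since the factor is decreasing in $\lambda_i(\bar L)$, its minimum over $\lambda_i(\bar L)\in(0,2d_{\max}(\bar G)]$ is attained at $\lambda_i(\bar L)=2d_{\max}(\bar G)$; substituting $k=\epsilon d_{\max}(\bar G)$ there and rearranging should collapse the required inequality to $(1-p)(1-\epsilon)\ge 0$, which holds because $p\in[0,1]$ and $\epsilon<1/d_{\max}(\bar G)\le 1$. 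Each summand is then at most $\frac{1}{2p(1-k)\lambda_i(\bar L)}$, and summing gives $J_{\mathrm{UB}}\le\frac{\sigma^2}{2p^3(1-k)}\frac{R_{\mathrm{ave}}}{\epsilon}$.

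I expect the upper bound to be the delicate part. The lower bound follows from a one-line monotonicity observation, whereas the upper bound hinges on the spectral estimate $\lambda_N(\bar L)\le 2d_{\max}(\bar G)$ together with the algebraic reduction of the resulting inequality to $(1-p)(1-\epsilon)\ge 0$. Some care is needed to verify that the denominator factor remains strictly positive throughout the range of eigenvalues, which is precisely what the standing assumption $k=\epsilon d_{\max}(\bar G)<1$ guarantees.
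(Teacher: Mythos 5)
Your proposal is correct and follows essentially the same route as the paper: both bounds are obtained from Proposition~\ref{prop:bounds} by factoring $\lambda_i(\bar L)$ out of each denominator, bounding the remaining factor by $2$ for the lower bound, and using $\lambda_i(\bar L)\le 2d_{\max}$ together with the reduction to $(1-p)(1-\epsilon)\ge 0$ for the upper bound. The paper arranges the upper-bound algebra as $2\epsilon(p-1)\lambda_i(\bar L)+(2-\epsilon p\lambda_i(\bar L))\lambda_i(\bar L)$ rather than minimizing the linear factor explicitly, but the computation is identical.
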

\begin{proof}
For the lower bound, we immediately observe from \eqref{eq:lower} that
$$
 J_{\mathrm{LB}}\ge \dfrac{\sigma^2}{\epsilon p^2 N}\sum_{i=2}^N\dfrac{1}{2\lambda_i(\bar L)}=\dfrac{\sigma^2}{2\epsilon p^2} R_{\mathrm{ave}}.
 $$
For the upper bound, we remind 
$\lambda_i(\bar L)\le 2 d_{\max}$ and obtain:
\begin{align*}
  \MoveEqLeft[3]  2(1+\epsilon p-\epsilon)\lambda_i(\bar L)-\epsilon p \lambda_i^2(\bar L)=\\ 
    &=2\epsilon(p-1)\lambda_i(\bar L)+(2-\epsilon p\lambda_i(\bar L))\lambda_i(\bar L)\\
    &\ge 2\epsilon(p-1)\lambda_i(\bar L)+\left(2-\epsilon p(2d_{\max})\right)\lambda_i(\bar L)\\
    &\ge 2p(1-k)\lambda_i(\bar L),
\end{align*}
which yields:
$$
J_{\mathrm{UB}}\le \dfrac{\sigma^2}{\epsilon p^2 N}\sum_{i=2}^N\dfrac{1}{2p(1-k)\lambda_i(\bar L)}=\dfrac{\sigma^2}{2\epsilon p^3 (1-k)}R_{\mathrm{ave}},
$$
thereby completing the proof.
\end{proof}
The bounds of Proposition~\ref{prop:average}
can also be equivalently expressed as
$$
\dfrac{\sigma^2}{2 p^2 k} d_{\max} R_{\mathrm{ave}}\le J_{\mathrm{noise}}\le \dfrac{\sigma^2}{2p^3k(1-k )}d_{\max} R_{\mathrm{ave}}.
$$
Since only $d_{\max}$ and $R_{\mathrm{ave}}$ depend on the graph (and therefore on $N$), this form of the bounds implies that   $J_{\mathrm{noise}}=\Theta(d_{\max}R_{\mathrm{ave}})$ as $N\to\infty$ and proves that the bounds are asymptotically tight for RIDLs.
For instance, for 
sparse graphs with bounded degree,
the rate of growth of $J_{\mathrm{noise}}$ is equal to the rate of growth of $R_{\mathrm{ave}}$. Instead, for dense graphs where $d_{\max}$ is linear in $N$, the rate of growth of $J_{\mathrm{noise}}$ is the rate of $NR_{\mathrm{ave}}$.

\section{Examples: sparse and dense graphs}
In this section, we illustrate the behavior of the noise index for RIDL matrices associated to graphs with particular structures, which allow for the explicit computation of the bounds in Theorem~\ref{thm:bounds} and Proposition~\ref{prop:bounds}.
We consider $\epsilon=k/d_{\max} (\bar G)$ with $k\in (0,1)$ and we perform computations for networks with $3\le N \le 100$, $p=0.9$, and $\sigma^2=1$.

\paragraph{Star graphs}
For the star graph $S_N$ we have $d_{\max}(S_N)=N-1$ and we choose $\epsilon=k/(N-1)$. The eigenvalues of the Laplacian matrix are given by $\lambda_i(\bar L)=1$ for $i=2,\ldots ,N-1$ and $\lambda_N(\bar L)=N$, so that the bounds are:
\begin{gather*}
J_\mathrm{LB} = \dfrac{\sigma^2}{\epsilon p^2N}\left(\dfrac{N-2}{2-\epsilon p^2}+\dfrac{1}{N(2-\epsilon p^2 N)}\right),\\
J_\mathrm{UB} =\dfrac{\sigma^2}{\epsilon p^2 N}\left(\dfrac{N-2}{\epsilon p+2-2\epsilon}+\dfrac{1}{N(2\epsilon p+2-2\epsilon-\epsilon p N)} \right),
\end{gather*}
implying that $J_{\mathrm{noise}}= \frac{\sigma^2}{2kp^2}N+o(N)
$ as $N\to\infty$. 
Figures~\ref{fig:star1} and \ref{fig:star2} present the results of the computations for the star graph with $k=0.8$.

\begin{figure}
\centering
\subfigure[$J_{\mathrm{noise}}$]
{ \includegraphics[width=4cm]{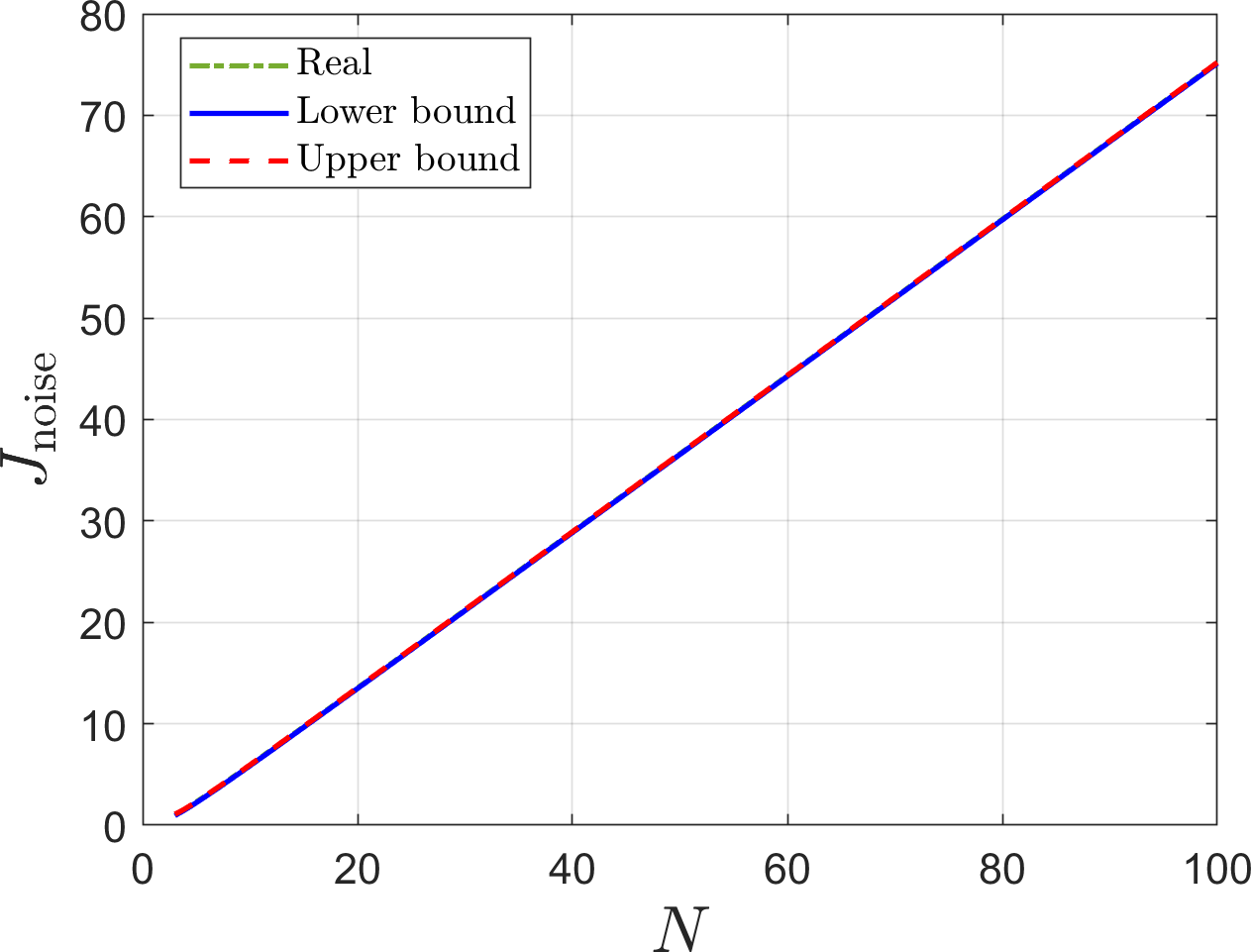}\label{fig:star1}
}
\subfigure[Relative error]
{ \includegraphics[width=4cm]{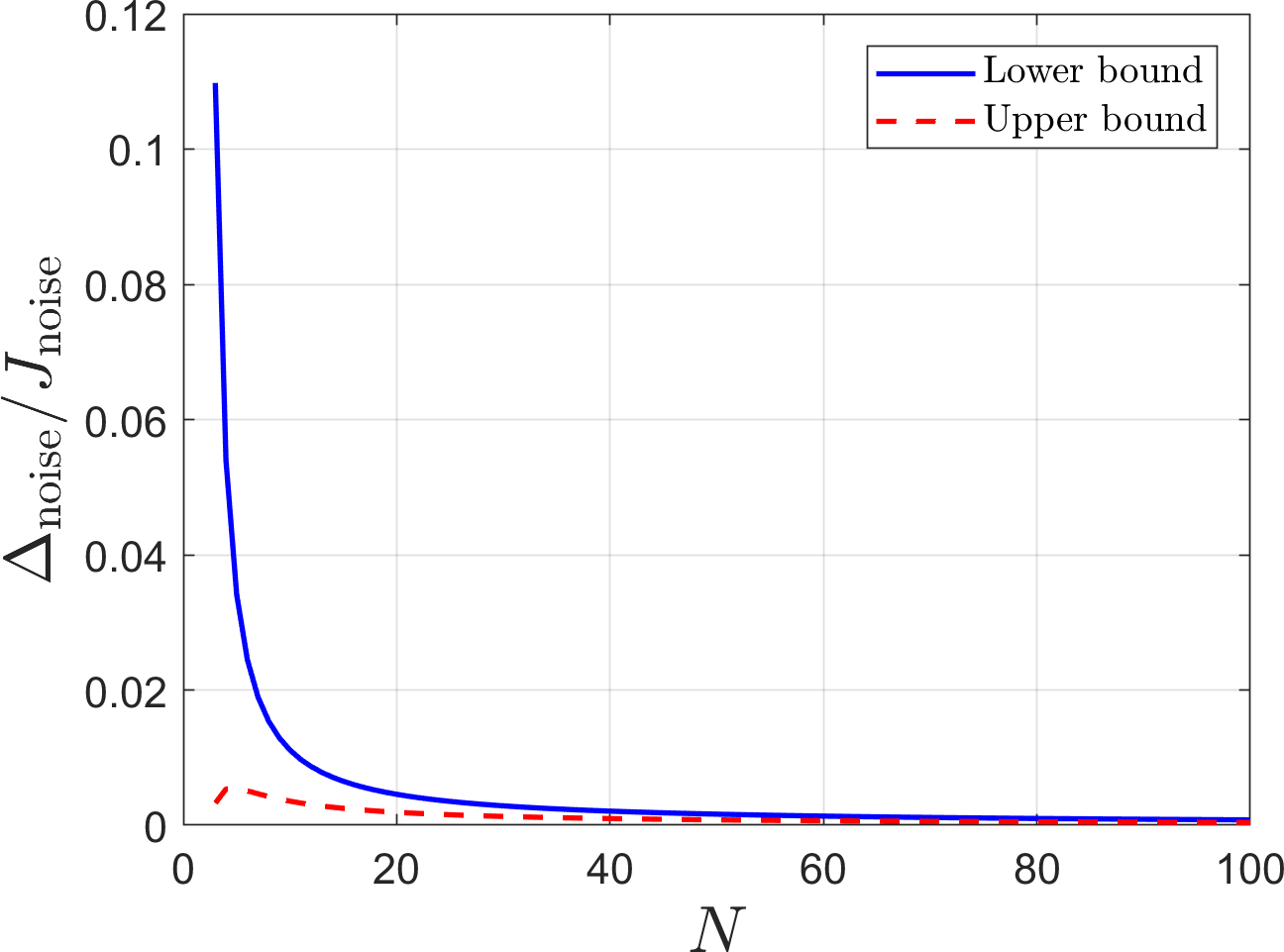}\label{fig:star2}
}
\caption{Computations for a star graph $S_N$ with growing $N$.}
\end{figure}

\paragraph{Paths and grids}
For the path graph $P_N$, $d_{\max}(P_N)=2$ and we select $\epsilon=0.8/2$ for all $N$. The eigenvalues of the Laplacian matrix are given by $\lambda_i(\bar L)=2-2\cos\left(\frac{\pi(i-1)}{N}\right)$ for $i=2,\ldots ,N$ such that the corresponding bounds are:
\begin{gather*}
J_\mathrm{LB}\!\!=\!\!\dfrac{\sigma^2}{4\epsilon p^2N}\!\sum_{i=1}^{N-1}\!\dfrac{1}{1\!-\!\epsilon p^2\!-\!\epsilon p^2\!\cos^2\!\left(\frac{\pi i}{N}\right)\!+\!(2\epsilon p^2\!-\!1)\!\cos\left(\frac{\pi i}{N}\right)},\\    
J_\mathrm{UB}\!\!=\!\!\dfrac{\sigma^2}{4\epsilon p^2N}\!\sum_{i=1}^{N-1}\!\dfrac{1}{1\!-\!\epsilon\!-\!\epsilon p\cos^2\!\left(\frac{\pi i}{N}\right)\!+\!(\epsilon p\!+\!\epsilon\!-\!1)\cos\left(\frac{\pi i}{N}\right)}
\end{gather*}
and the rate of growth of the noise index is $J_{\mathrm{noise}}= \Theta(N)$.
Fig.~\ref{fig:path1} and \ref{fig:path2} present the results of the computations for the path graph.
\begin{figure}
\centering
\subfigure[$J_{\mathrm{noise}}$]
{ \includegraphics[width=4cm]{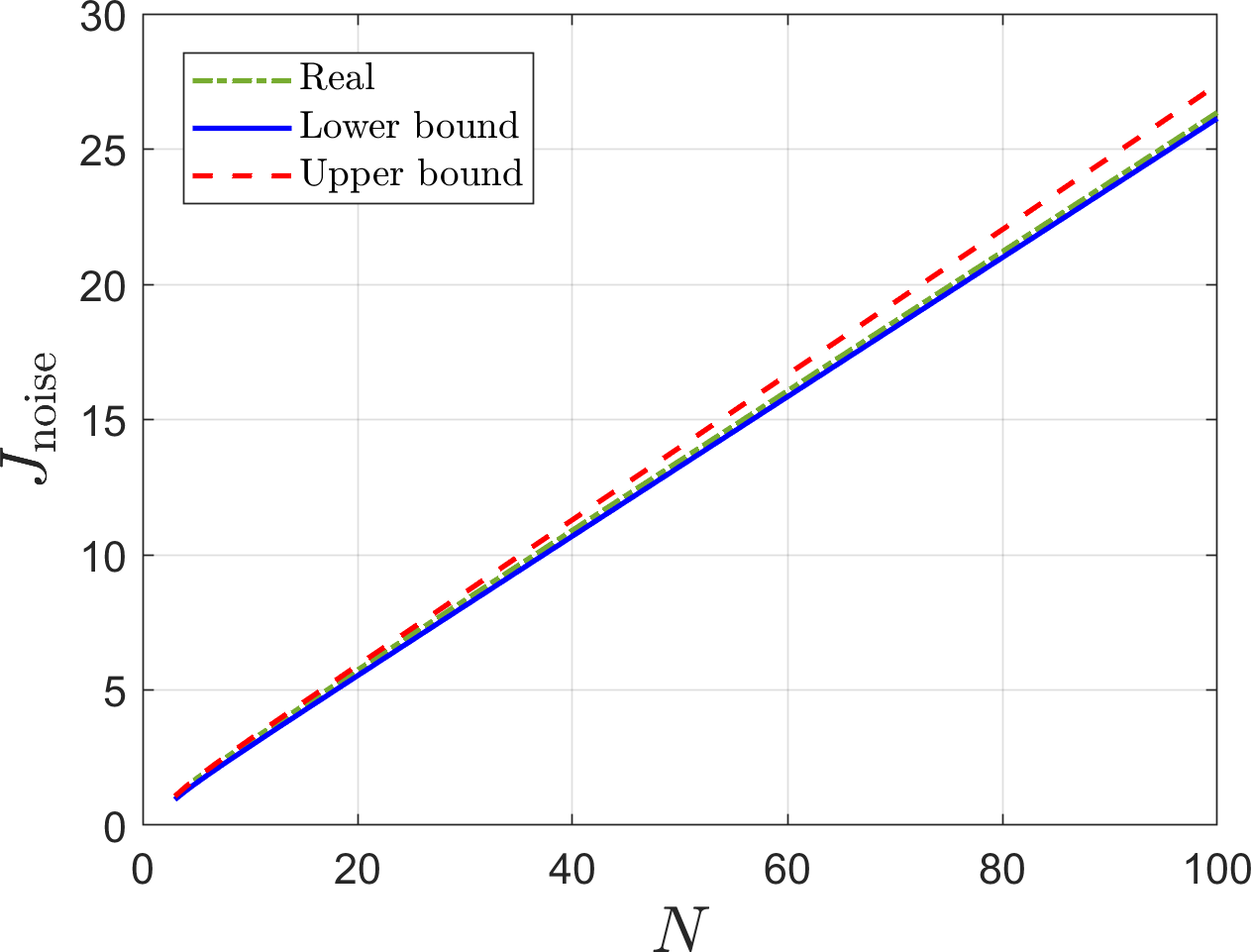}\label{fig:path1}
}
\subfigure[Relative error]
{ \includegraphics[width=4cm]{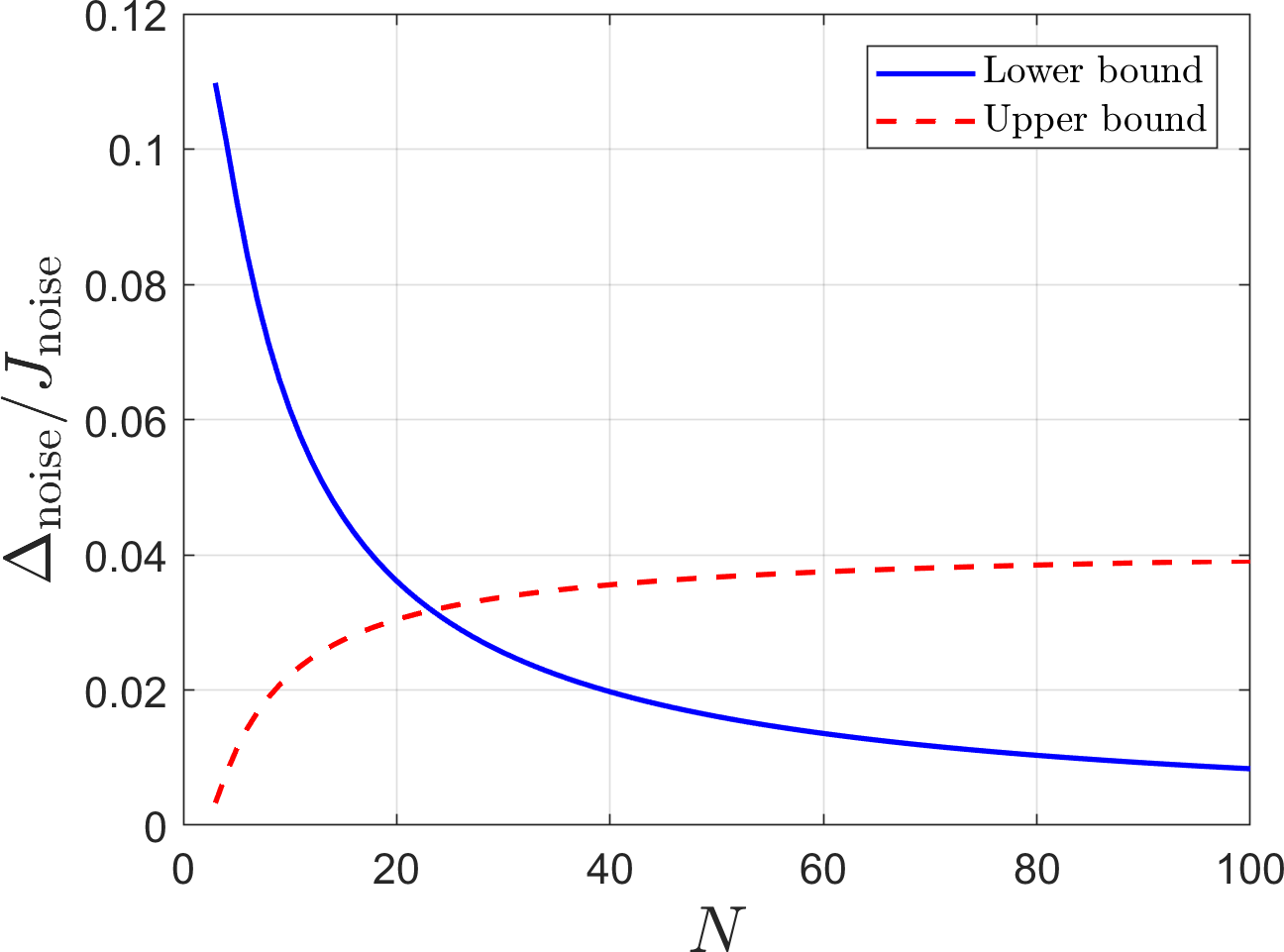}\label{fig:path2}
}
\caption{Computations for a path graph $P_N$ with growing $N$.}
\end{figure}

The maximum degree for a 2-D grid is $d_{\max}(\text{2-D})=4$ and for a 3-D grid is $d_{\max}(\text{3-D})=6$.  
The spectrum of the Laplacian matrix for a k-D grid are given by: $2k-2\sum_{i=1}^k\cos(\frac{\pi}{N}h_i)$ with $h_i\in\{0,\ldots, N-1 \}$.
Due to the complexity of dealing with the closed form from Theorem~1, we present only the behavior of the bounds for the 2-D grid in Fig~\ref{fig:2_d} and for the 3-D grid in Fig~\ref{fig:3_d}. These examples also confirm the theoretical analysis.

\begin{figure}
\centering
\subfigure[2-D grid]
{ \includegraphics[width=4cm]{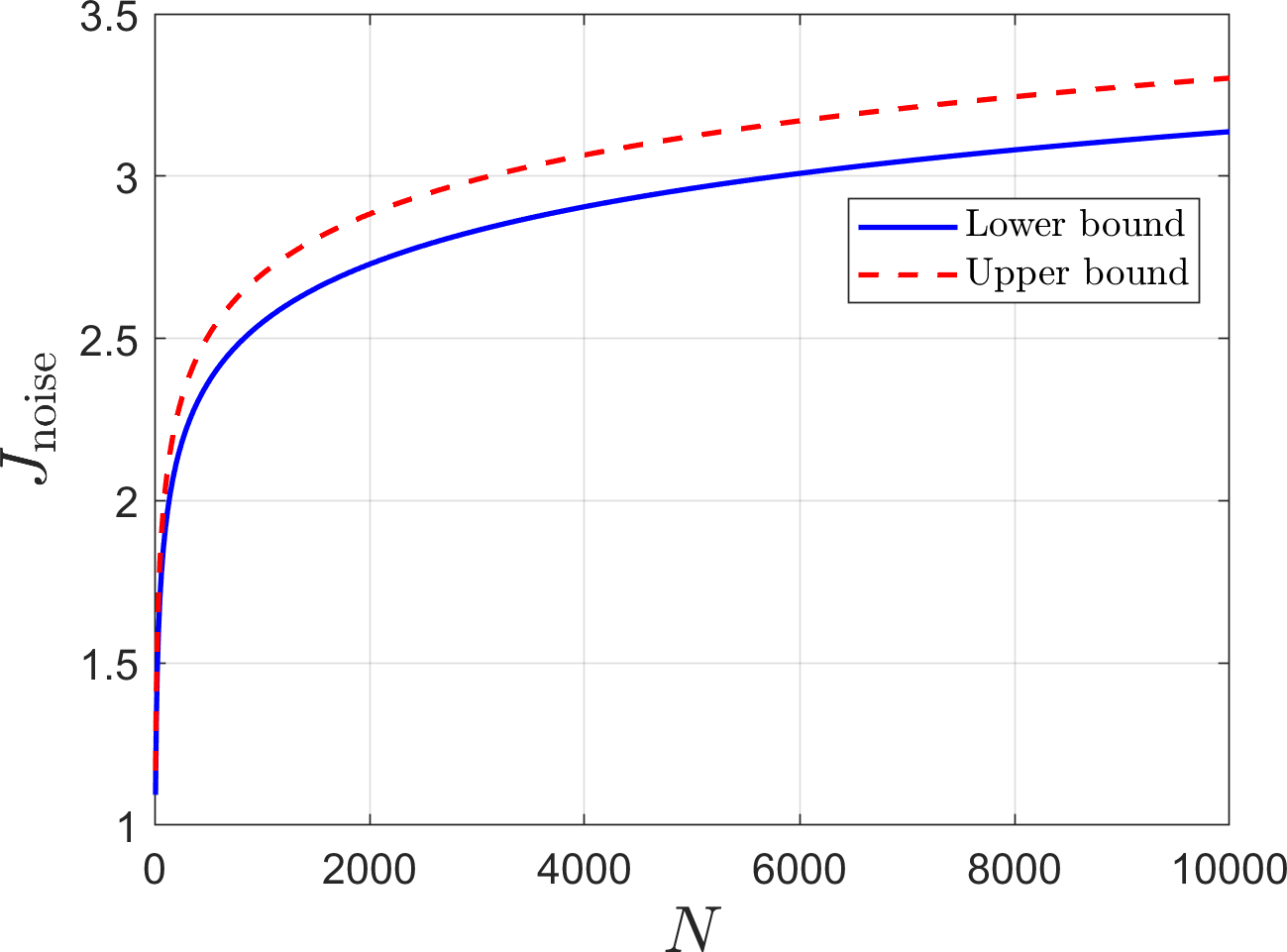}\label{fig:2_d}
}
\subfigure[3-D grid]
{ \includegraphics[width=4cm]{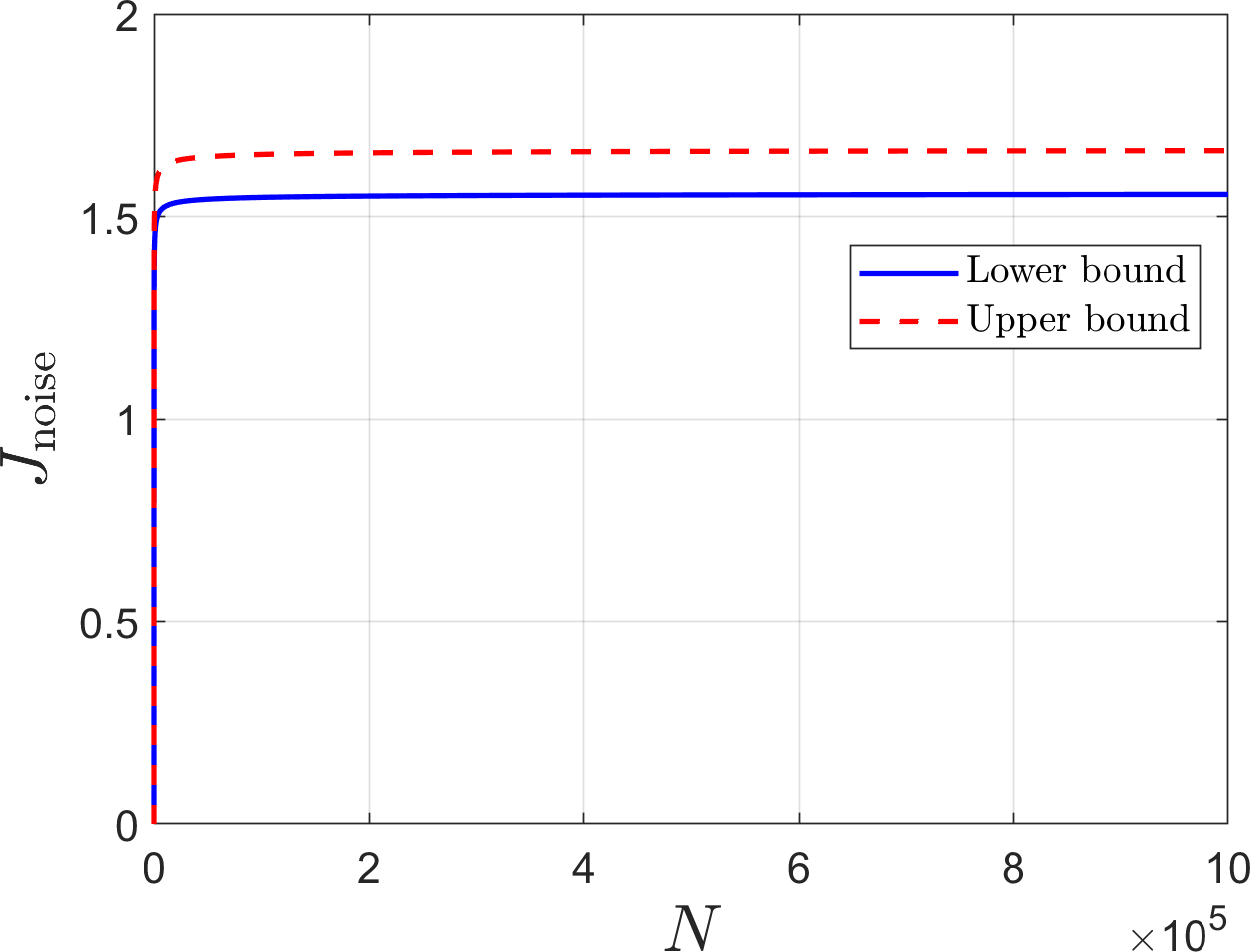}\label{fig:3_d}
}
\caption{Computations for 2$-$D and 3$-$D grids with growing $N$.}
\end{figure}

\paragraph{Dense graphs (complete and Erd\H{o}s-R\'{e}nyi)}
Since for a complete graph $K_N$, $d_{\max}(K_N)=N-1$, we select $\epsilon=k/(N-1)$.
The eigenvalues of the Laplacian matrix of the complete graph $K_N$ are $\lambda_i(\bar L)=N$ for $i=2,\ldots ,N$ such that the bounds for $J_{\mathrm{noise}}$ are:
\begin{gather*}
J_\mathrm{LB}=\dfrac{\sigma^2(N-1)}{\epsilon p^2N^2(2-\epsilon p^2 N)} \to \dfrac{\sigma^2}{p^2k(2-p^2k)} \text{as $N\to\infty$},\\
J_\mathrm{UB}=\dfrac{\sigma^2(N-1)}{\epsilon p^2 N^2(2\!+\!2\epsilon p\!-\!2\epsilon\!-\!\epsilon p N)}\to \dfrac{\sigma^2}{p^2k(2\!-\!pk)} \text{as $\!N\to\infty$},
\end{gather*}
implying a rate of growth $J_{\mathrm{noise}}=\Theta (1).$
For the computations we choose $k=0.8$ and we obtain the results in Figures \ref{fig:complete1} and \ref{fig:complete2}: the latter implies that the upper bound is equal to the noise index. 
\begin{figure}
\centering
\subfigure[$J_{\mathrm{noise}}$]
{ \includegraphics[width=4cm]{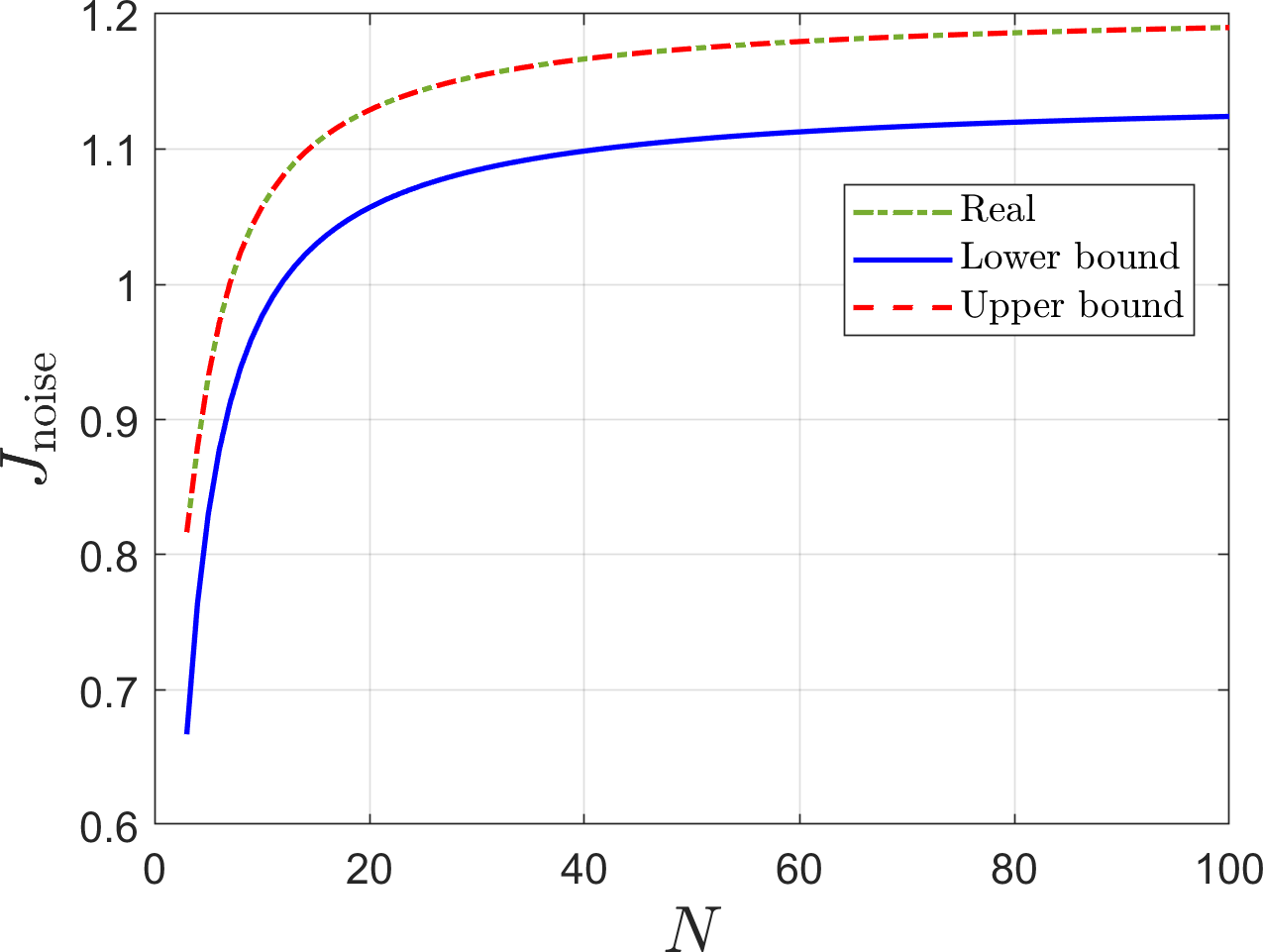}\label{fig:complete1}
}
\subfigure[Relative error]
{ \includegraphics[width=4cm]{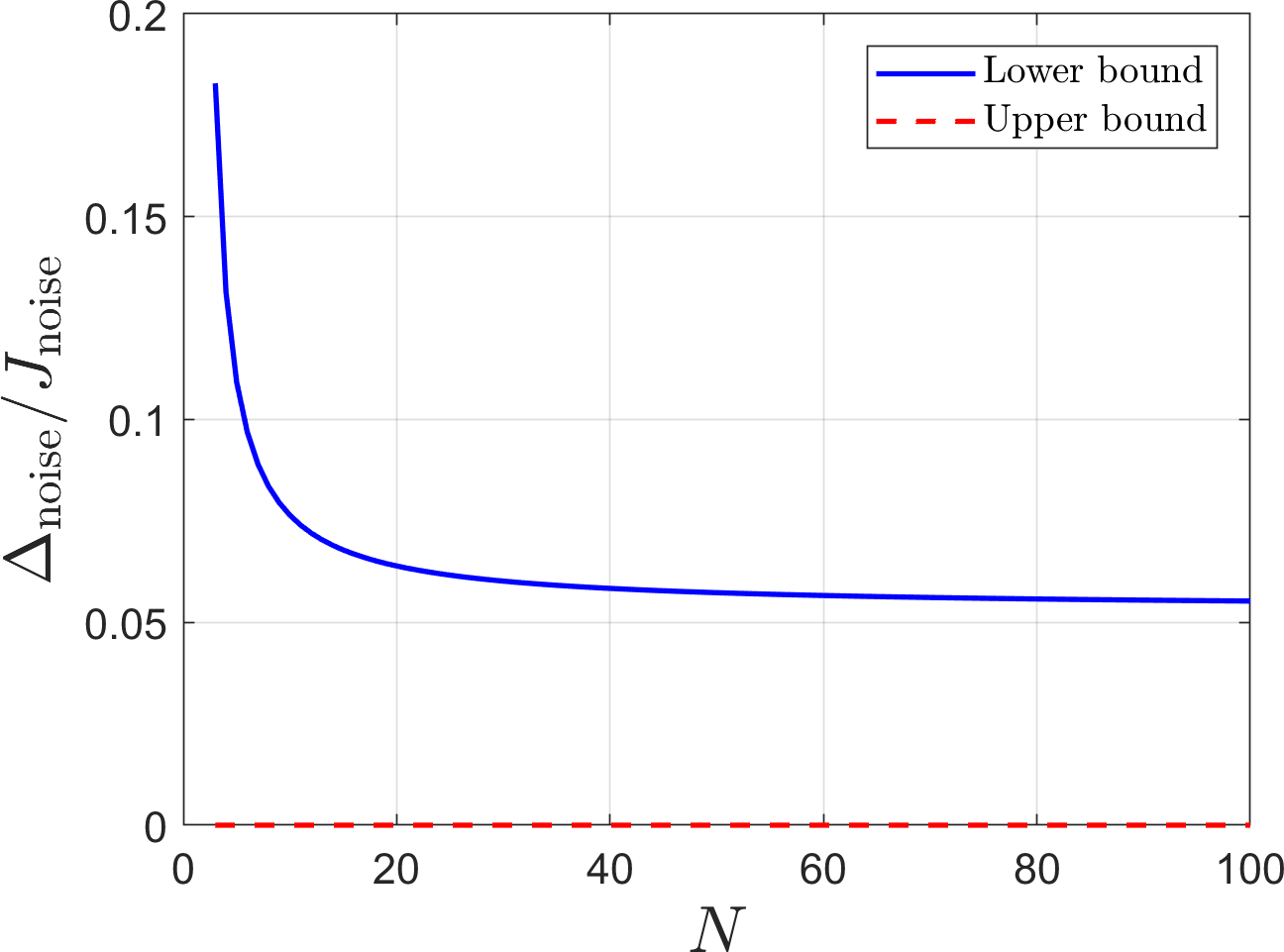}\label{fig:complete2}
}
\caption{Computations for a complete graph $K_N$ with growing $N$.}
\end{figure}
Simulations suggest that other dense graphs behave similarly to the complete. As an example, Figures \ref{fig:erdos1} and \ref{fig:erdos2} present the results of the computations for the Erd\H{o}s-R\'{e}nyi graph $p_\mathrm{er}=0.8$ and $\epsilon=0.8/(N-1)$.
\begin{figure}
\centering
\subfigure[$J_{\mathrm{noise}}$]
{ \includegraphics[width=4cm]{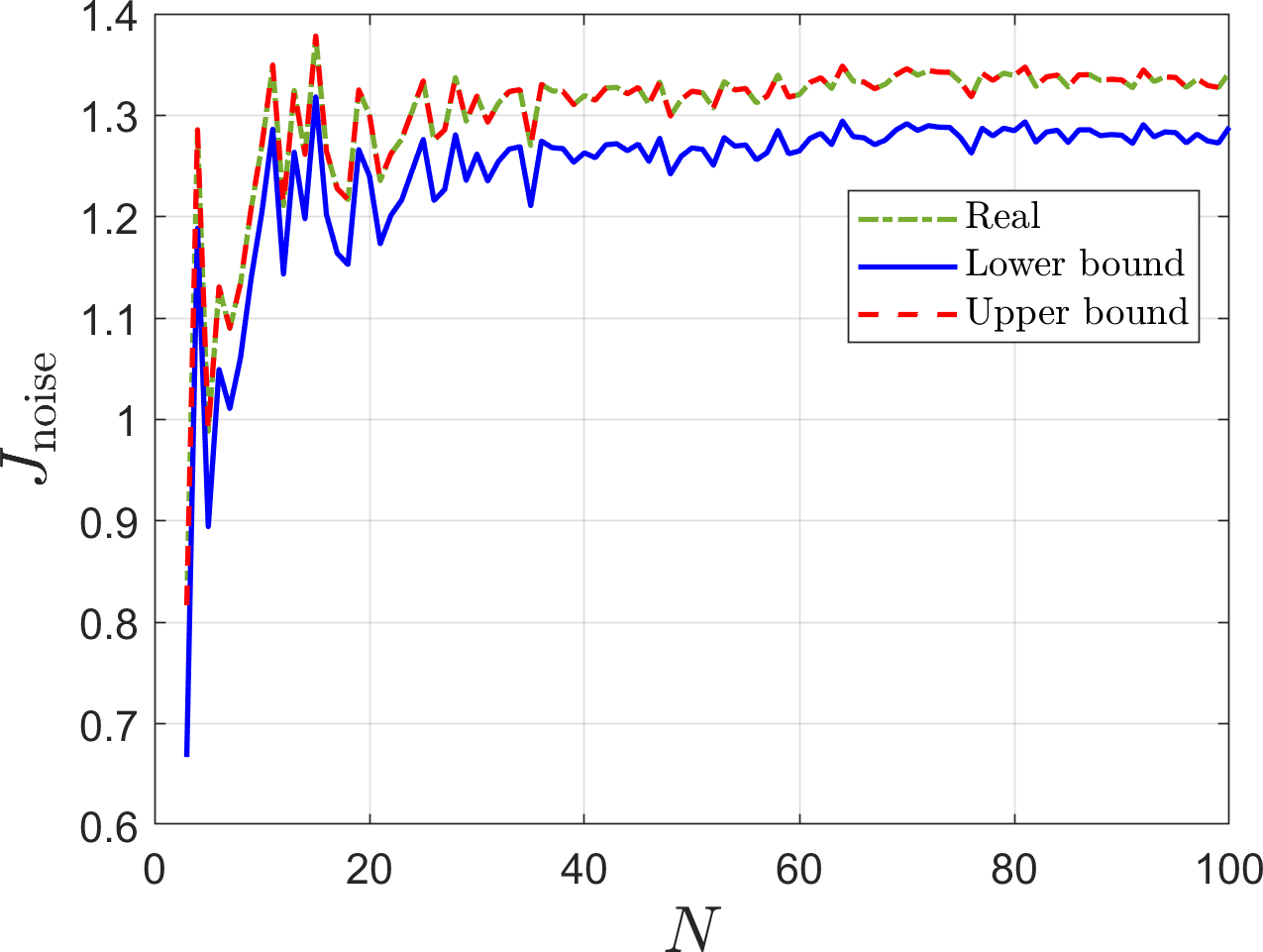}\label{fig:erdos1}
}
\subfigure[Relative error]
{ \includegraphics[width=4cm]{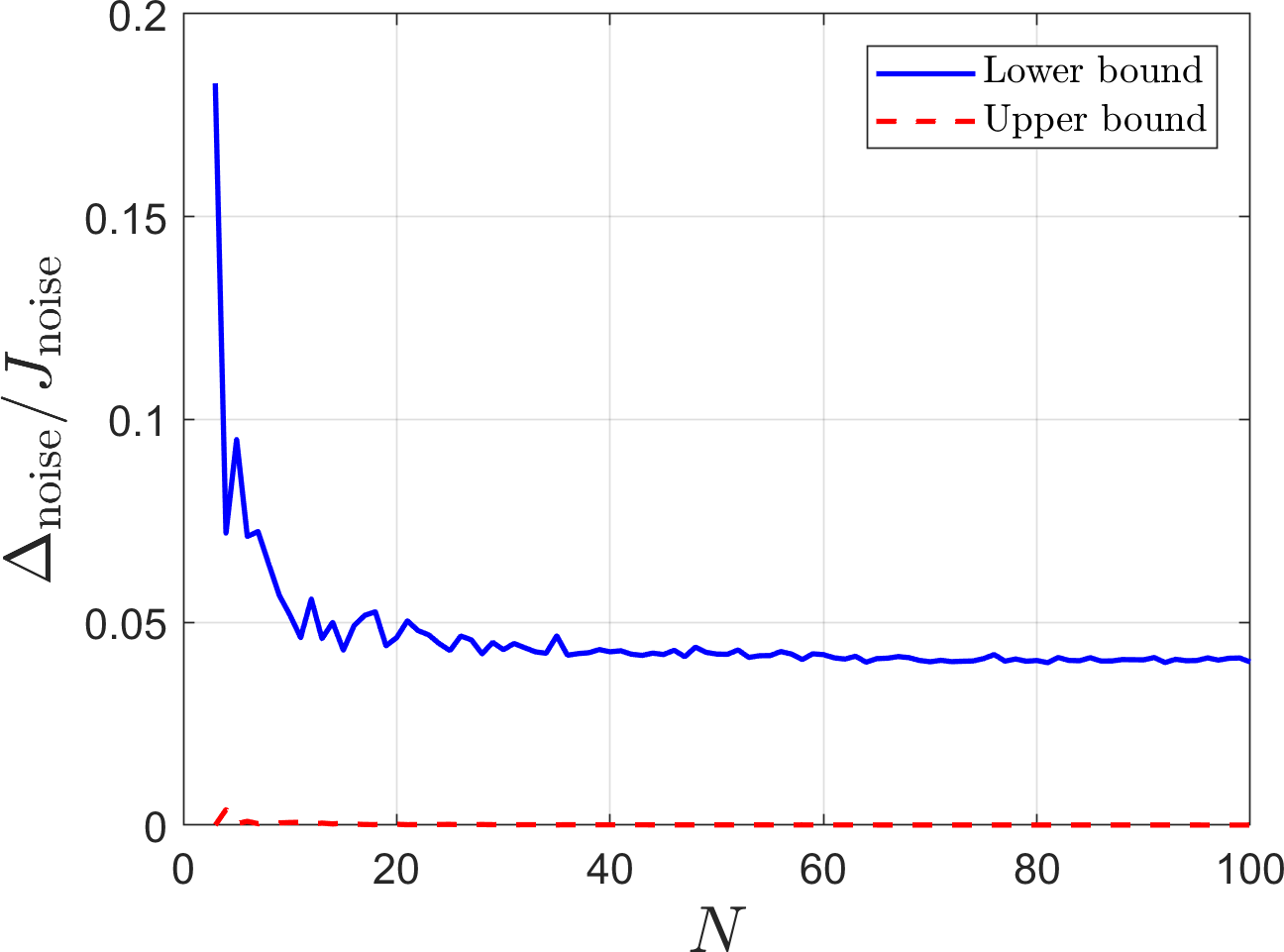}\label{fig:erdos2}
}
\caption{Computations for a Erd\H{o}s-R\'{e}nyi graph with growing $N$.}
\end{figure}

\paragraph{Variation of $p$}
Figures~\ref{fig:lower_bound_p} and \ref{fig:upper_bound_p} present the behavior of the relative error for the bounds of Theorem~\ref{thm:bounds} for graphs with $N=100$, $\epsilon=0.8/d_{\max}$ and $0.1\le p\le 0.9$.
\begin{figure}
\centering
\subfigure[Lower bound]
{ \includegraphics[width=4cm]{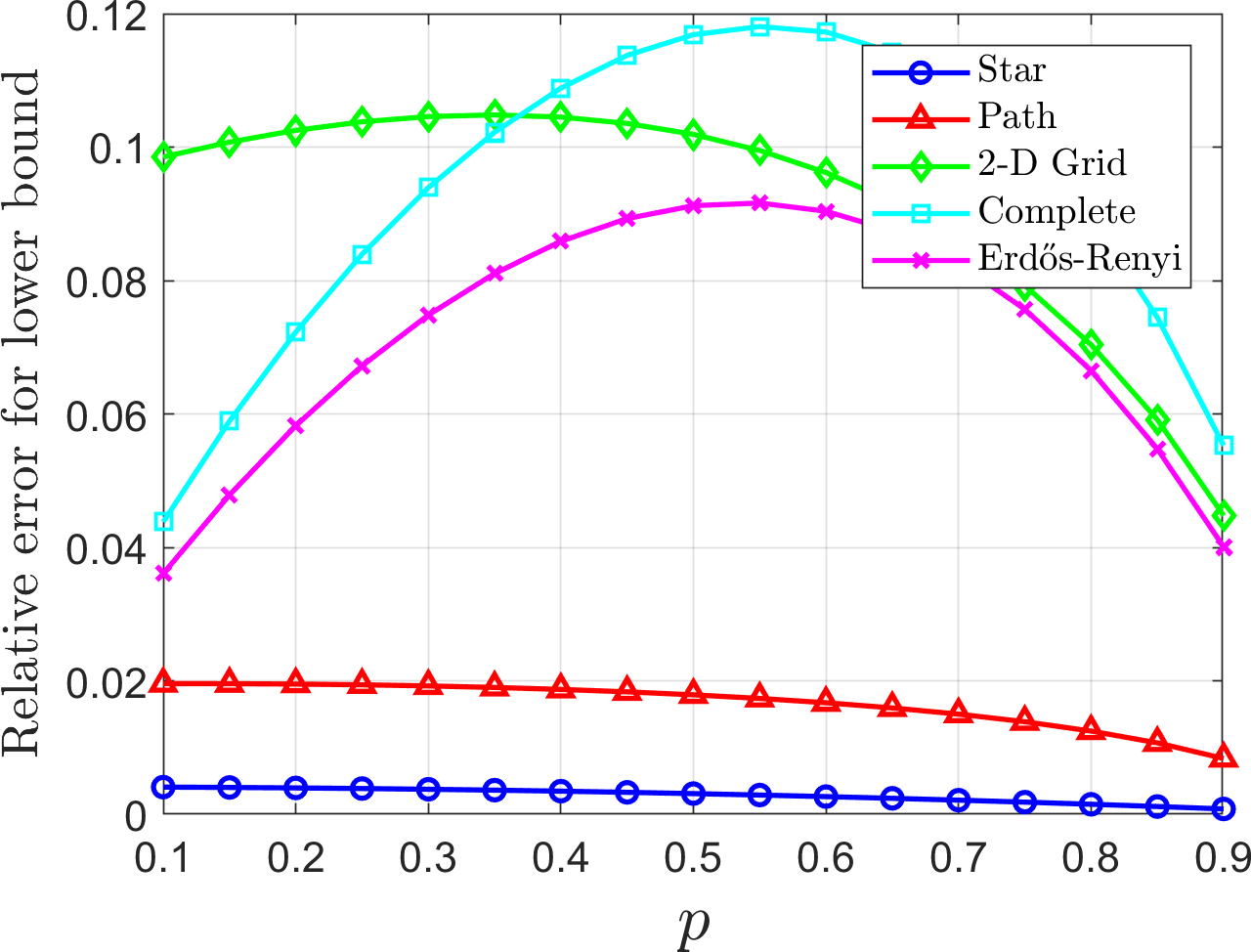}\label{fig:lower_bound_p}
}
\subfigure[Upper bound]
{ \includegraphics[width=4cm]{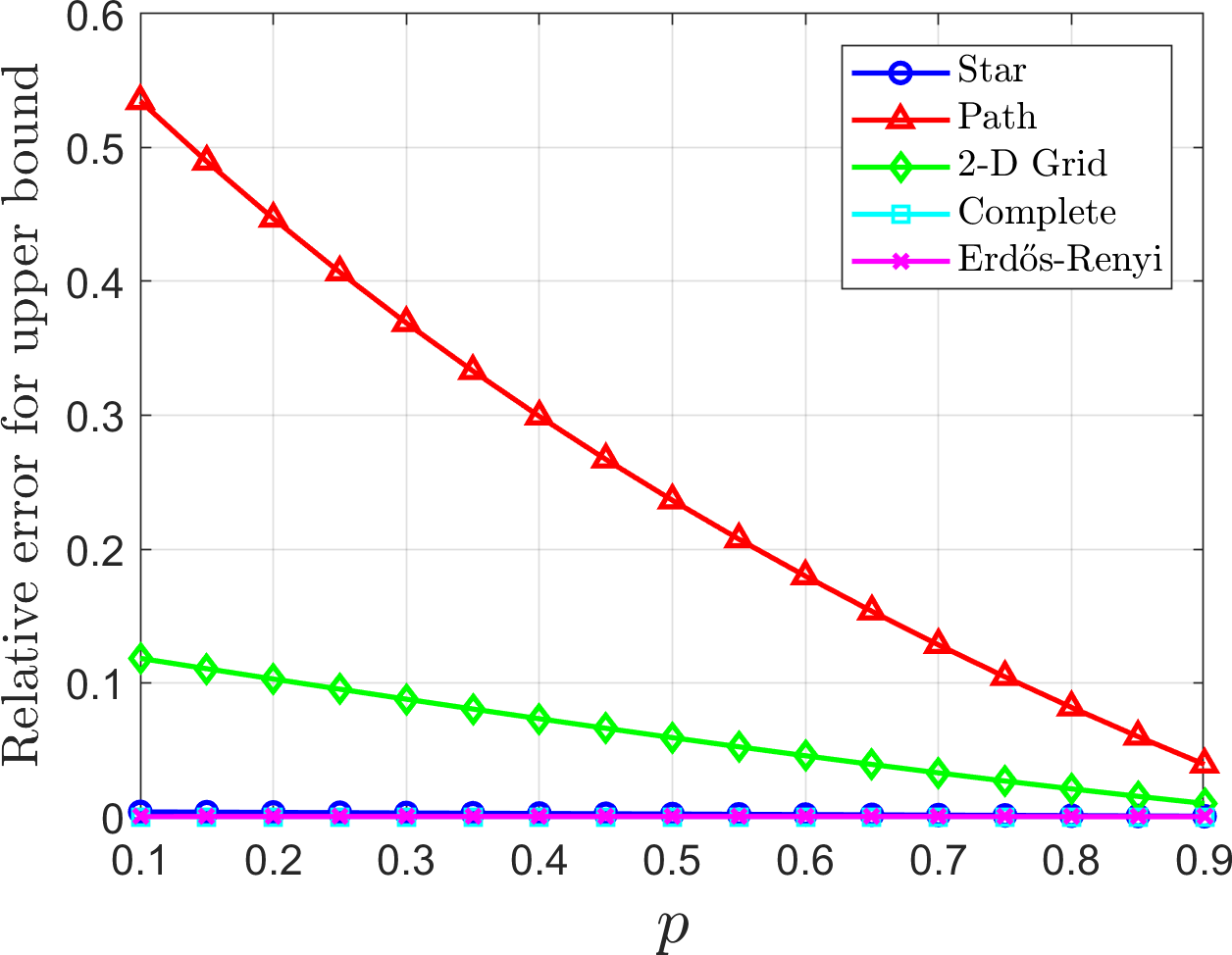}\label{fig:upper_bound_p}
}
\caption{Computations of the relative error for different values of $p$ for graphs with $N=100$.}
\end{figure}

\section{Conclusions}
In this work, we analyzed the influence of noise in random consensus with an application to OMAS by considering the random sampling of active nodes. We derived an expression for a noise index that measures the deviation from the consensus point and two bounds that depend on the spectrum of expected matrices related to the network. For Randomly Induced Discretized Laplacians, the bounds can be expressed as functions of the eigenvalues of the expected Laplacian matrix and, more specifically, of its average effective resistance. 
For sparse graphs, such as stars and paths, 
the noise index grows linearly and the lower bound presents a better approximation of the real value. For dense graphs, such as complete and Erd\H{o}s-R\'{e}nyi graphs, 
the noise index is bounded in $N$ and the upper bound is a better approximation. 

As future work, we would like to perform the analysis of noisy random consensus while allowing for a certain degree of statistical dependence in the generation of the stochastic matrices: such an extension would be very useful to study OMAS under more realistic assumptions.

\renewcommand{\newblock}{}
\bibliographystyle{IEEEtran}
\bibliography{references}

\end{document}